\journal{Automatica}
\newtheorem{theorem}{Theorem}
\newtheorem{proposition}{Proposition}
\newtheorem{assumption}{Assumption}
\newtheorem{remark}{Remark}
\def\downparenfill{$\m@th\braceld\leaders\vrule\hfill\bracerd$}
\def\overparen#1{\mathop{\vbox{\ialign{##\crcr\crcr
\noalign{\kern0.4ex}
\downparenfill\crcr\noalign{\kern0.4ex\nointerlineskip}
$\hfil\displaystyle{#1}\hfil$\crcr}}}\limits}
\def\NN{{\mathbb N}}    % set of natural number
\def\RR{{\mathbb R}}    % field of real number
\def\BB{{\mathcal B}}
\DeclareMathOperator{\dis}{d}
\def\P{{\Pi}}
\def\upC{{\mathcal{\overline C}}}
\def\lowC{{\mathcal{\underline C}}}
\def\cA{{\mathcal{A}}}
\def\cD{{\mathcal{D}}}
\newcommand\cV[3][]{\ifthenelse{\isempty{#1}}{\mathcal{V}_{#2} (#3)}{\mathcal{V}_{#2}^{#1} (#3)}}
\def\bV{{\boldsymbol{V}}}
\def\cX{{\mathcal{X}}}
\begin{document}

\begin{frontmatter}

\title{Total stability and integral action for discrete-time nonlinear systems}

\tnotetext[t1]{Research partially funded by the ANR Delicio project.}

\author[LAGEP]{S. Zoboli}
\ead{samuele.zoboli@univ-lyon1.fr}
%\author[GIPSA]{A. Mattioni}
\author[LAGEP]{D. Astolfi}
\ead{daniele.astolfi@univ-lyon1.fr}
\author[LAGEP]{V. Andrieu}
\ead{vincent.andrieu@univ-lyon1.fr}
%\author[GIPSA]{C. Prieur}

\address[LAGEP]{Universit\'e Lyon 1, Villeurbanne, France -- CNRS, UMR 5007, LAGEPP, France. }
%\address[GIPSA]{GIPSA-Lab}

\begin{abstract}
Robustness guarantees are important properties to be looked for during control design. They ensure stability of closed-loop systems in face of uncertainties, unmodeled effects and bounded disturbances. While the theory on robust stability is well established in the continuous-time nonlinear framework, the same cannot be stated for its discrete-time counterpart. In this paper, we propose the discrete-time parallel of total stability results for continuous-time nonlinear system. This enables the analysis of robustness properties via simple model difference in the discrete-time context. First, we study how existence of equilibria for a nominal model transfers to sufficiently similar ones. Then, we provide results on the propagation of stability guarantees to perturbed systems. Finally, we relate such properties to the constant output regulation problem by motivating the use of discrete-time integral action in discrete-time nonlinear systems.
\end{abstract}

% \begin{keyword}
% Discrete-time, nonlinear systems, robustness, total stability
% \end{keyword}

\end{frontmatter}

\section{Introduction}
\label{sec:intro}
Stability of an equilibrium is of utter importance in the design of feedback controllers. Stability properties are typically inferred through the analysis of the system model in closed-loop. However, it is well known that uncertainties inherently exist in control applications, due to the presence of unmodeled effects and parameters mismatches. To tackle the issue, robust control and robust stability analysis have become fundamental tools for control design \cite{zhou1998essentials}. In practice, due to the nature of digital controllers, these laws are implemented in a discrete-time form. The problem of synthesizing such robust controllers can be cast in an optimization framework, both in the linear and the nonlinear scenario, e.g. \cite{petersen1987stabilization,petersen1986riccati,de1999new, kiumarsi2017h, li2020adaptive}. Nevertheless, a strong theoretical foundation turns out to be a fundamental design tool.  While the theory on robust stability and regulation is well-developed and mature for continuous-time systems, see e.g., \cite{francis1976internal,astolfi2017integral}, the discrete-time nonlinear scenario still misses some important results. Although it is known that, for sufficiently small sampling times, continuous-time results are valid in the discrete framework, such an approach may be restrictive or inapplicable for some control applications. 

In the context of discrete-time nonlinear systems,  some necessary local conditions linked to robustness appeared in \cite{lin1994design}. Here, the authors state that  a necessary condition for local stability of nonlinear discrete-time autonomous systems comes from the solvability of a  nonlinear equation. Also, results on robust  stability  appeared in \cite[Chapter 5]{agarwal2000difference}. Therein, it is shown that if the origin of the nominal system is locally stable and Lipschitz, then it is also locally robustly stable for bounded disturbances. More recently, in the context of converse Lyapunov theorems for discrete-time systems, \cite{kellett2004smooth,kellett2005robustness} proved that a necessary condition for robust stability is the existence of a smooth Lyapunov function.  Yet, to the best of authors' knowledge, there are no results mimicking well-established general results of ``total stability'' for continuous-time systems. The concept of total stability was firstly introduced in the works
of Dubosin \cite{dubosin1940problem}, Gorsin \cite{gorsin1948stability}, Malkin
\cite{malkin1944stability},   
and more recently studied in 
\cite{yoshizawa1969relationship,salvadori1977problem,astolfi2017integral}. In this context, 
robustness properties are analyzed directly via nonlinear unstructured models differences. This allows inferring 
the preservation of a stable equilibrium point for sufficiently similar plants by means of simple model comparison
\cite[Lemma 4, 5]{astolfi2017integral}.

The goal of this paper is therefore to translate such results to the discrete-time scenario. We draw conclusions similar to the continuous case, yet under some fundamental differences, given by the discrete nature of the system. In particular, we show that stability properties of the equilibrium of a nominal model imply the existence and stability of an equilibrium (possibly different from the former) for any perturbed system sufficiently ``close'' to the nominal one. The result is proved under some regularity assumptions and bounded mismatches. Moreover, we provide a counterexample highlighting that some results from the continuous-time scenario may not apply in the discretized framework. This disproves some arguments of \cite{lin1994design}. Finally, we link the obtained results on robust stability to the robust output regulation problem, building on recent forwarding techniques \cite{mattioni2017lyapunov,mattioni2019forwarding}. We justify the addition of an integral action for rejecting constant disturbances or tracking constant references. More specifically, we show that if the true model of the plant to be controlled is sufficiently close to one used for controller design, then output regulation is still achieved.

The rest of the paper is organized as follows: Section~\ref{sec:total_stab} presents the main result of the paper on total stability; Section~\ref{sec:regulation} applies the result to the problem of constant output regulation; 
Section~\ref{sec:conclusion} comments and concludes the paper.

\subsection{Notation} \label{sec:notation}

$\RR$, resp. $\NN$, denotes the set of 
real numbers, resp. nonnegative integers. $\RR_{\ge0}$ denotes the set of nonnegative real numbers.
In this work, we define a time-invariant nonlinear
dynamical discrete-time system as 
$x^+=f(x)$, where $x\in\RR^n$ is the state solution evaluated at timestep $k\in \NN$ with initial condition $x_0$, and $x^+$  is the state solution at step $k+1$.
Sets are denoted by calligraphic letters and, for a given set $\cX$, we identify its boundary by $\partial \cX$. The notation $\cX \setminus \mathcal G$ identifies the intersection between $\cX$ and the complement of $\mathcal G$. When a set $\cX$ is strictly included in a set $\mathcal G$, we use the notation $\cX \subsetneqq \mathcal G$. We use $|\cdot|$ as the norm operator for matrices and vectors. Moreover, we denote as  $\dis(x,\cX)$ a generic distance function between any point $x\in \RR^n$ and a closed set $\cX\subset \RR^n$. For instance, one may choose $\dis(x,\cX)= \inf_{z\in\cX}|x-z|$. Given a function $\alpha:\RR_{\ge 0}\to\RR_{\ge0}$ we say $\alpha \in \mathcal{K}$ if it is continuous, zero at zero and strictly increasing. Similarly, we say $\alpha \in \mathcal{K}_\infty$ if $\alpha \in \mathcal{K}$ and $\lim_{s\to\infty} \alpha(s)=\infty$. Moreover, for a square matrix $A$ we denote by $\lambda_{\max}(A)$ its maximum eigenvalue.
 As a final note, let $P\in \RR^{n\times n}$  be a symmetric positive definite matrix.
 For any $A \in \RR^{n\times n}$ and   for an arbitrary scalar $r:={r}_1 {r}_2$ with $r_1,{r}_2>0$, the generalized Schur's complement implies
 that the following inequalities are equivalent
$$
A^\top P A - r P \preceq 0 
\quad \Longleftrightarrow
\quad  
\begin{pmatrix}
-{r}_1 P & &A^\top P 
\\
P A & & -{r}_2 P
\end{pmatrix}
\preceq 0.
$$

% \section{Preliminary results in converse Lyapunov theorem}

% \begin{lemma}
% If the origin of the system is GAS, then there exists a smooth class $\mathcal{K}_\infty$  Lyapunov function.
% \end{lemma}

% \begin{proof}
% to be done following 2002, use lemma 5 kellet
% \end{proof}

% \begin{lemma}
% The Lyapunov level set of a  smooth class $\mathcal{K}_\infty$  Lyapunov function
% are homeotopic  spheres.
% \end{lemma}

% \begin{proof}
% to be done 
% \end{proof}
    
%     Remark 1:
% counterexample. why monotonicity is needed. example of Lucas.\\

% Remark 2: the result of Wei Lin is correct but the proof needs to use such arguments.\\

\section{Total stability results}\label{sec:total_stab}

In this section, we study how the stability properties of 
the origin of a given discrete-time autonomous nonlinear system 
\begin{equation}
    \label{eqn:sys}
    x^+ = f(x) ,
\end{equation}
transfer to systems described by a sufficiently similar difference equation
\begin{equation}
    \label{eqn:pert_sys}
    x^+ = \hat f(x) ,
\end{equation}
where $f:\RR^n\to\RR^n \; ,\hat f :\RR^n\to\RR^n$ are continuous functions. We propose two different results. The first one links the existence of an equilibrium for system \eqref{eqn:sys} to the existence
of an equilibrium for the perturbed system 
\eqref{eqn:pert_sys}. 
We show that an equilibrium for \eqref{eqn:pert_sys} exists, provided that the two models are locally close enough. More precisely, the result holds if
the functions $f$ and $\hat f$ are not too different in the $C^0$ norm,
and system \eqref{eqn:sys} presents an attractive forward invariant set containing its equilibrium and which is homeomorphic to the unit ball. The second result considers the case where both the dynamics and the Jacobians of the two systems
\eqref{eqn:sys}, \eqref{eqn:pert_sys}
are sufficiently similar. Under such conditions, we show that the existence of a locally exponentially stable equilibrium for \eqref{eqn:sys} implies the existence of a locally exponentially stable equilibrium for \eqref{eqn:pert_sys} close to it. Moreover, we present a lower bound on the size of the domain of attraction of the equilibrium for \eqref{eqn:pert_sys}.

\subsection{Existence of equilibrium}

We now present the minimal assumption required to show the existence of an equilibrium for \eqref{eqn:pert_sys}. 
To this end, we introduce the following notation.
Given a positive function $V:\cA\subseteq\RR^n\to\RR_{\ge 0 }$ and a 
positive real number $c>0$, we denote the sublevel set of such a function as
 \begin{equation}\label{eq_SublevelSet}
         \cV{c}{V}:=\{x\in \cA:V(x)\leq c\}.
\end{equation}

\begin{assumption}%[Minimal assumption]
\label{asmp:homeo_sphere}
Let $\cA$ be an open subset of $\RR^n$.
There exists a $C^0$  function  $V:\cA\to\RR_{\ge 0 }$  satisfying $V(0)=0$ and such that the following holds:
\begin{enumerate}
    \item there exists a positive real number $\bar c$ such that the set $\cV{\bar c}{V}$ is homeomorphic to the unit ball;
\item there exists $\rho \in (0,1)$ such that 
    \begin{equation}\label{eqn:thm1_smooth_lyap}
    V(f(x)) \le \rho V(x), \qquad\forall\, 
    x\in \cV{\bar c}{V}.
\end{equation}
\end{enumerate}
\end{assumption}
 \begin{remark}\label{rem1} Assumption~\ref{asmp:homeo_sphere} is not requiring the nominal system \eqref{eqn:sys} to be asymptotically stable. It solely assumes the existence of an attractive forward invariant compact set which is homeomorphic to the unit ball.  However, since $V$ is not strictly positive outside of the origin,  
 $V$ may have local minima and does not allow 
 to conclude asymptotic stability of the origin.
%  Indeed, $V$ may have local minima, as it is not strictly positive outside of the origin. As a consequence, we cannot conclude asymptotic stability of the origin.
 \end{remark}

The first result is formalized by the following proposition. More comments on the  assumption are postponed after the proof.

\begin{proposition} \label{prop:existence_eq}
Let Assumption~\ref{asmp:homeo_sphere} hold.
Then, for any positive $\underline c\leq  \bar c$ there exists a  positive real number $\delta$ such that, for any  continuous function $\hat f:\RR^n\to\RR^n$ 
satisfying
\begin{equation}\label{eq:prop1_condition}
\begin{array}{l}
    |\hat f(x)-  f(x)|< \delta,  \qquad \forall x \in \cV{\bar c}{V} 
\end{array}
\end{equation}
the corresponding
system \eqref{eqn:pert_sys}
 admits an equilibrium point $x_e \in \cV{\underline c}{V}$. Moreover, such systems has no other 
 equilibrium in the set
 $\cV{\bar c}{V}\setminus\cV{\underline c}{V}$.
\end{proposition}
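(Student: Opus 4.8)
The plan is to deduce the existence of an equilibrium from a fixed-point / degree-theoretic argument applied to the map $\hat f$ on the set $\cV{\underline c}{V}$, exploiting the fact that $\cV{\bar c}{V}$ is homeomorphic to the unit ball and that $f$ pushes the larger sublevel set strictly inside itself. First I would fix $\underline c \in (0,\bar c]$ and observe that, by \eqref{eqn:thm1_smooth_lyap}, $f$ maps $\cV{\bar c}{V}$ into $\cV{\rho\bar c}{V}$; since $V$ is continuous, there is a ``gap'' between the image $f(\cV{\bar c}{V})$ and the boundary $\partial\cV{\bar c}{V}$. More importantly, one wants that $\hat f$ maps $\cV{\underline c}{V}$ into itself, or at least that it has no fixed point on a suitable boundary so that a topological degree is defined and nonzero. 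The key quantitative step is to choose $\delta$ small enough, using uniform continuity of $V$ on the compact set $\overline{\cV{\bar c}{V}}$, so that $|\hat f(x) - f(x)| < \delta$ on $\cV{\bar c}{V}$ forces $V(\hat f(x)) \le \tfrac{1+\rho}{2} V(x) < V(x)$ for all $x$ with $V(x)$ bounded away from $0$, and in particular $\hat f(\cV{\underline c}{V}) \subseteq \cV{\underline c}{V}$.

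Next I would transport the problem to the unit ball $\BB$ via the homeomorphism $\Phi:\BB \to \cV{\bar c}{V}$ furnished by Assumption~\ref{asmp:homeo_sphere}(1). The composition $g := \Phi^{-1}\circ \hat f \circ \Phi$ is a continuous self-map of (a neighborhood of) a compact ball-like set; restricting attention to $\Phi^{-1}(\cV{\underline c}{V})$, which is a compact set homeomorphic to a ball, one applies Brouwer's fixed point theorem to obtain a fixed point of $g$, hence an equilibrium $x_e = \Phi(\text{fixed point}) \in \cV{\underline c}{V}$ of $\hat f$. The subtlety here is that $\Phi^{-1}(\cV{\underline c}{V})$ need not literally be a Euclidean ball, only homeomorphic to one; but Brouwer's theorem is a topological statement invariant under homeomorphism, so this is harmless once we know $\hat f$ maps $\cV{\underline c}{V}$ into itself.

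For the ``no other equilibrium in $\cV{\bar c}{V}\setminus\cV{\underline c}{V}$'' claim, I would argue directly: if $x \in \cV{\bar c}{V}\setminus\cV{\underline c}{V}$, then $V(x) \ge \underline c > 0$, so $V(x)$ is bounded below by a strictly positive constant, and the estimate $V(\hat f(x)) \le \tfrac{1+\rho}{2}V(x) < V(x)$ (valid once $\delta$ is chosen as above) shows $\hat f(x) \ne x$. This uses crucially that the strict contraction of $V$ is enforced everywhere on $\cV{\bar c}{V}$, not just near the origin, which is exactly what Assumption~\ref{asmp:homeo_sphere}(2) grants.

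The main obstacle I anticipate is the first step: turning the $C^0$-closeness $|\hat f - f| < \delta$ into a Lyapunov-type decrease $V(\hat f(x)) < V(x)$, because $V$ is only assumed continuous (not Lipschitz, not smooth), so one cannot simply bound $V(\hat f(x)) - V(f(x))$ by $L\delta$. The fix is to use uniform continuity of $V$ on the compact set $\overline{\cV{\bar c}{V}}$: given the target contraction rate $\tfrac{1+\rho}{2}$, the multiplicative slack $\tfrac{1-\rho}{2}\,\underline c$ that is available whenever $V(x)\ge\underline c$ translates, via the modulus of continuity of $V$, into an admissible $\delta$; and near the origin (where $V(x) < \underline c$), one instead only needs $\hat f$ to keep points inside $\cV{\underline c}{V}$, which again follows from uniform continuity of $V$ together with $V(f(x)) \le \rho\underline c$ there. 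Care must be taken that the same $\delta$ works on all of $\cV{\bar c}{V}$, which is why compactness of the sublevel set — a consequence of it being homeomorphic to the closed unit ball — is essential.
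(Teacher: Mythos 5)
There is a genuine gap at the fixed-point step. You apply Brouwer's theorem to $\Phi^{-1}(\cV{\underline c}{V})$, asserting that it is ``a compact set homeomorphic to a ball.'' Nothing in Assumption~\ref{asmp:homeo_sphere} supports this: item~1 guarantees the ball structure only for the sublevel set at the specific level $\bar c$, and the whole point of the discussion in Section~\ref{sec_topology} is that sublevel sets of discrete-time Lyapunov functions can fail to be homeomorphic to a ball --- in the paper's own example they are not even path-connected. A disconnected (or annulus-shaped) compact set does not have the fixed point property, so knowing that $\hat f$ maps $\cV{\underline c}{V}$ into itself does not by itself yield a fixed point there. The subtlety you flag (that $\Phi^{-1}(\cV{\underline c}{V})$ is only homeomorphic to a ball rather than literally a ball) is not the real issue; the real issue is that it need not be homeomorphic to a ball at all.

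The repair is the route the paper actually takes, and you already have every ingredient for it: show that $\hat f$ maps the \emph{outer} set $\cV{\bar c}{V}$ into itself (your uniform-continuity estimate gives $V(\hat f(x))\le \rho V(x)+\omega(\delta)<\bar c$ there), apply Brouwer on $\cV{\bar c}{V}$ --- the set that \emph{is} assumed homeomorphic to $\BB$ --- to get a fixed point $x_e\in\cV{\bar c}{V}$, and then invoke your annulus argument ($V(\hat f(x))<V(x)$ for all $x$ with $V(x)>\underline c$, hence no fixed points in $\cV{\bar c}{V}\setminus\cV{\underline c}{V}$) to conclude that $x_e$ must lie in $\cV{\underline c}{V}$. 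With that reordering your proof matches the paper's; your use of uniform continuity of $V$ on the compact sublevel set is an equivalent substitute for the paper's auxiliary functions $p$ and $q$, whose continuity at $s=0$ rests on exactly the same fact. If you instead want to keep the degree-theoretic phrasing, you would need to set up the degree of $\mathrm{id}-\hat f$ on the interior of $\cV{\bar c}{V}$ and use excision over the annulus, which you sketch but do not carry out.
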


\begin{proof}
Consider $\underline  c\leq \bar c$ and let $\tilde \rho$ be any positive real number satisfying
$$
\rho < \tilde \rho < 1\ .
$$
Since the set  $\cV{\bar c}{V}$ is homeomorphic to the unit ball, it is bounded. Moreover, the function $V$ being continuous, $\cV{\underline  c}{V}$ is a compact subset. 
 Next, we define the function $p:\RR_{\geq 0}\to\RR$ as
\begin{equation}
p(s) =\max_{\substack{x\in\cV{\bar c}{V}\\v \in \RR^n: |v|=1}}
  \big\{V(f(x)+sv)-\bar c\big\},
\end{equation}
with $s$ a positive real number.
Recalling item 2 of Assumption~\ref{asmp:homeo_sphere}, we obtain
\begin{align}
    p(0) &= \max_{\substack{x\in\cV{\bar c}{V}\\v \in \RR^n: |v|=1}}
  \big\{V(f(x))-\bar c\big\} \notag \\
  &\leq \max_{\substack{x\in\cV{\bar c}{V}\\v \in \RR^n: |v|=1}}
  \big\{ \rho V(x)-\bar c \big\} \notag \\
  &\leq (\rho-1)\bar c<0.\label{eq:F0}
\end{align}
Then, we define the function 
$q:\RR_{\geq 0}\to\RR$ as
% $$
% G(s) = \max_{\underline  c \leq V(x)\leq c_0,|v|=1}\{V(f(x)+sv)-\tilde \rho V(x)\}.
% $$
$$
q(s) = 
\max_{\substack{x \in \cV{\bar c}{V}\setminus\cV{\underline c}{V}\\ 
v \in \RR^n: |v|=1}}
\big\{V(f(x)+sv)-\tilde \rho V(x) \big\}.
$$
It satisfies
\begin{align}\notag
q(0) &= \max_{x \in \cV{\bar c}{V}\setminus\cV{\underline c}{V}}\{V(f(x))-\tilde \rho V(x)\} ,\\ \notag
&\leq \max_{x \in \cV{\bar c}{V}\setminus\cV{\underline c}{V}}\{(\rho-\tilde \rho) V(x)\} ,
\\&\leq(\rho-\tilde \rho) \underline c<0.
\label{eq:G0}
\end{align}
As a consequence, since $p,q$ are continuous functions satisfying
\eqref{eq:F0}, \eqref{eq:G0}, there exists $\delta>0$ such that 
\begin{equation}
p(s)<0, 
\quad q(s)<0, 
\qquad \forall\, s\in[0,\delta].
\label{eq:pqnegative}
\end{equation}
Now, pick any continuous function $\hat f$ satisfying \eqref{eq:prop1_condition}.
Note that for all $x$ in $\cV{\bar c}{V}$ such that $f(x)\neq\hat f(x)$, 
\begin{equation}
    V(\hat f(x))-\bar c = V(f(x)+sv)-\bar c
\end{equation}
with $s=|f(x)-\hat f(x)|$, $v=\frac{f(x)-\hat f(x)}{|f(x)-\hat f(x)|}$. Consequently, 
 for all $x$ in $\cV{\bar c}{V}$ inequalities \eqref{eq:pqnegative} and \eqref{eq:prop1_condition} imply
\begin{equation}
    V(\hat f(x))-\bar c \leq p(|f(x)-\hat f(x)|)< 0.
\end{equation}
Hence, $\hat f(x)\in \cV{\bar c}{V}$.
By assumption, $\cV{\bar c}{V}$ is homeomorphic to a
unitary ball, which we denote as $\BB=\{z\in \RR^n :|z|\leq1\}$. Hence, there exists two continuous mappings $T:\cV{\bar c}{V}\to\BB$ and $T^{-1}:\BB\to\cV{\bar c}{V}$ such that $T\circ T^{-1}(z)=z$ for all $z$ in $\BB$.
Hence, the mapping
$$
T\circ\hat f\circ T^{-1} : \BB\to \BB
$$
 is a continuous function. Hence, by Brouwer's fixed point theorem, there exists $z^*\in \BB$ such that
$$
T\circ\hat f\circ T^{-1}(z^*)=z^*.
$$
Thus, it implies
$$
\hat f\circ T^{-1}(z^*) = T^{-1}(z^*).
$$
We deduce that 
$x_e=T^{-1}(z^*)$ is a fixed point  belonging to $\cV{\bar c}{V}$.
Now, let us consider the set $\cV{\bar c}{V}\setminus\cV{\underline c}{V}$. As before, with the same definitions of $s$ and $v$, inequalities \eqref{eq:pqnegative} and \eqref{eq:prop1_condition} imply that, for all $x \in \cV{\bar c}{V}\setminus\cV{\underline c}{V}$, it holds
\begin{align*}
V(\hat f(x)) - \tilde \rho V(x)&= V(f(x)+sv)- \tilde \rho V(x) \\
&\leq q(s)   <0.
\end{align*}
Hence, $\hat f(x)\neq x$ for all $x \in \cV{\bar c}{V}\setminus\cV{\underline c}{V}$. Consequently,  $x_e$ belongs to $\cV{\underline c}{V}$ and this concludes the proof.
\end{proof}

\subsection{About Assumption~\ref{asmp:homeo_sphere}}
\label{sec_topology}
Proposition~\ref{prop:existence_eq}
establishes the conditions under which the existence of an equilibrium for the nominal model \eqref{eqn:sys} implies the existence of an equilibrium
for any perturbed system
 \eqref{eqn:pert_sys}
sufficiently close to the nominal one. This result parallels the continuous-time one presented in \cite[Lemma 4]{astolfi2017integral}.
However, different assumptions are needed.
In the continuous-time case, 
the origin of the nominal system is supposed
to be asymptotically stable.
In turn, 
for any forward invariant set, 
this ensures the existence of 
a Lyapunov function whose  level sets 
are homeomorphic to a sphere, see
\cite[Theorem 1.2]{wilson1967structure}.
Unfortunately, in the discrete-time case 
this fact is in general not true. As a matter of fact,
Lyapunov level sets may be  
non-homeomorphic to spheres, contrarily to what is stated in \cite[Proof of Theorem 2.7]{lin1994design}.
As an example, see 
\cite{gonzaga2012stability}. 
 This phenomenon is due to the nature of such systems, as the presence of jumps doesn't allow an easy translation of continuous-time results.
As a consequence, in Assumption~\ref{asmp:homeo_sphere}
we ask for the existence of an invariant 
sublevel set $\cV{\bar c}{V}$ homeomorphic to a ball.
At the same time, we do not require asymptotic
stability of the origin, see Remark~\ref{rem1}.

 In the following, we present an example showing the aforementioned behavior. In particular, we focus on a simple and stable linear system . We carefully craft a non-trivial Lyapunov function guaranteeing asymptotic stability of the origin. Successively, we exploit the structure of such a function to show that it has no sublevel set which is homeomorphic to a ball.
 
%  the existence of a Lyapunov function 
%  whose level sets are not (never?) homeotopic to spheres.
% \color{magenta}
% However, differently from the continuous-time case, 
%  the discrete-time scenario requires an assumption on the
% shape of the 
% Lyapunov-like function $V$ for system \eqref{eqn:sys}, see item~1 of  Assumption~\ref{asmp:homeo_sphere}. This is needed even if considering asymptotically stable equilibria as in \cite[Lemma 4]{astolfi2017integral}. Indeed, while the smoothness of a Lyapunov function can be inferred from the asymptotic stability properties (see \cite[Theorem 1]{jiang2002converse}), 
% %for a valid Lyapunov function 
% the homeotopicity of level sets to spheres is not always guaranteed in discrete-time, see e.g., \cite{gonzaga2012stability}. This is due to the nature of such systems, as the presence of jumps doesn't allow an easy translation of continuous-time results, e.g. \cite[Theorem 1.2]{wilson1967structure}. As a matter of fact, such jumps may lead to situations where a valid Lyapunov function has no sublevel set which is homeotopic to a ball, 
% contrarily to what is stated in \cite[Proof of Theorem 2.7]{lin1994design}. 
% In the following, we present a simple example showing such a phenomenon.
% \color{black}
% In particular, 
Consider the linear system 
\begin{equation}\label{eq_SystLyap}
x^+ = f(x) = \frac{1}{2} I_n x,
\end{equation}
where $x\in \RR^n$, $I_n$ is the identity matrix of dimension $n$ and the candidate Lyapunov function
\begin{equation*}
V(x) =
\begin{cases}
0 & x=0\\
6|x| - 5 \cdot 2^i & 2^i \le|x| < 2^i+2^{i-1}\\
-4|x| +5\cdot2^{i+1} & 2^i+2^{i-1} \le |x| < 2^{i+1}\\
\end{cases}
\end{equation*}
where $i\in \mathbb Z \cup \{-\infty,+\infty\}$ and it is uniquely defined by $|x|$. It can be verified that 
$V(x)$ is continuous.
Consider the case where $2^i \le|x| < 2^i+2^{i-1}$ for some $i$ in $\mathbb Z$. We have $2^{i-1} \le |f(x)| < 2^{i-1}+2^{i-2}$ and
\begin{align*}
    V(f(x)) -V(x) &= 6|f(x)| - 5 \cdot 2^{i-1} - 6|x| + 5 \cdot 2^i \\
     &= -3|x| + 5 \cdot 2^{i-1}\\
     &< - 2^{i-1} < 0.
\end{align*}
Similarly, for $2^i+2^{i-1} \le |x| < 2^{i+1}$ for some $i$ in $\mathbb Z$ we obtain $2^{i-1}+2^{i-2} \le |x| < 2^{i}$ and
\begin{align*}
    V(f(x)) -V(x) &= -4|f(x)| +5\cdot2^{i} +4|x| -5\cdot2^{i+1} \\
     &= 2|x| - 5 \cdot 2^{i}\\
     &< - 2^{i} < 0.
\end{align*}
Hence, $V$ is a continuous Lyapunov function for the system. However,  it does not exist any $c>0$ such that the corresponding  sublevel set $\cV{c}{V}$ defined in \eqref{eq_SublevelSet}  is homeomorphic to a ball, since each sublevel set is not path-connected.
Figure~\ref{fig:counterex} shows such a behavior for the planar case $x \in \RR^2$. 

\begin{figure}
    \centering
    \includegraphics[width=\linewidth]{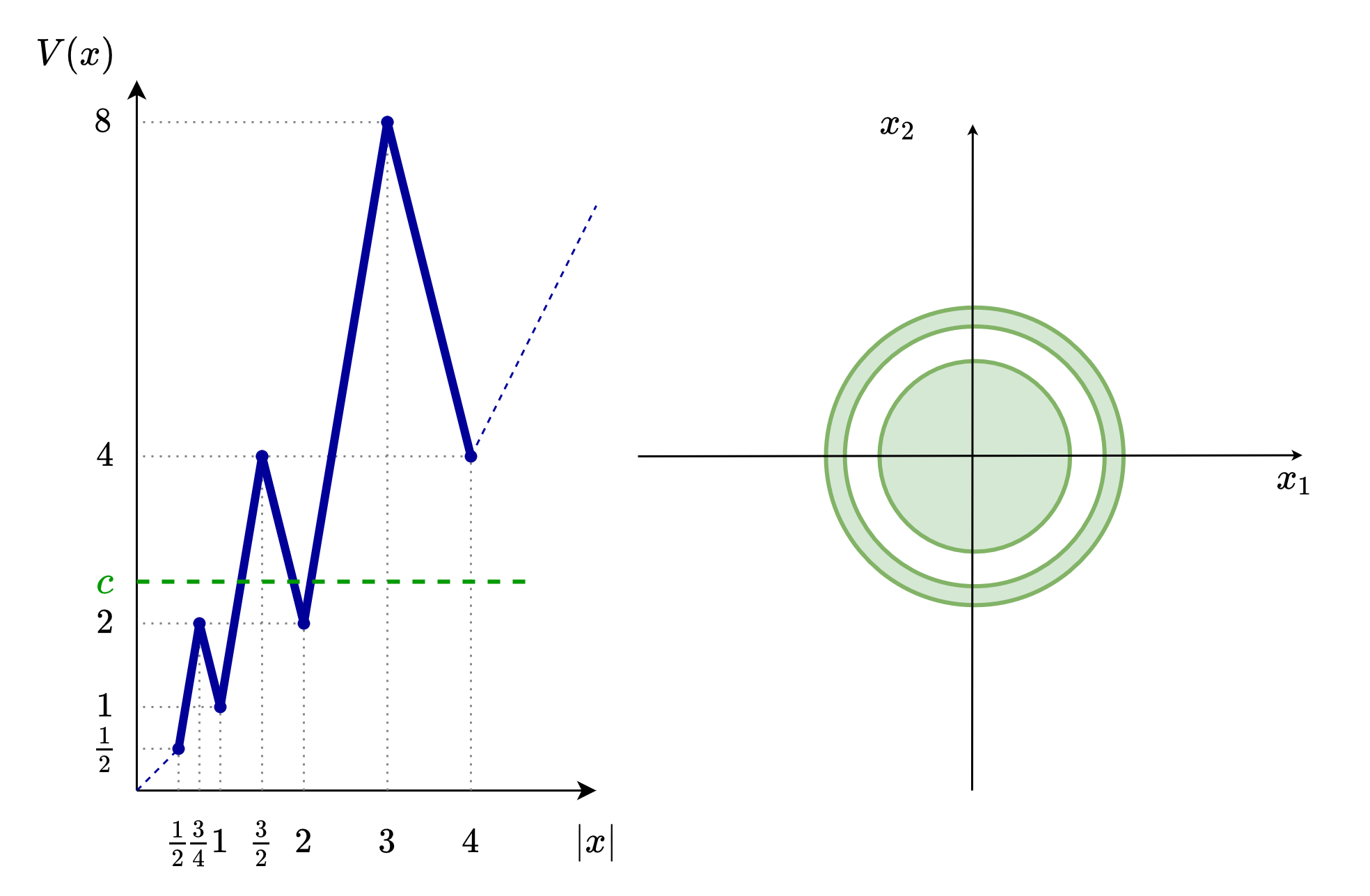}
    \caption{Lyapunov sublevel set for $x\in \RR^2$}
    \label{fig:counterex}
\end{figure}

Clearly, one could have picked a quadratic Lyapunov function for such a linear system. Yet, this example shows that discrete-time Lyapunov functions do not always possess the desired homeomorphicity property. As a consequence, we cannot guarantee that Lyapunov functions provided by converse Lyapunov theorems (e.g., \cite{jiang2002converse})  satisfy such a property.

\subsection{Existence of an exponentially stable equilibrium}

 We now present the main result on total stability
 of this paper, which is formalized in the next theorem. This result parallels \cite[Lemma 5]{astolfi2017integral}. We exploit the asymptotic properties of the equilibrium for the nominal system \eqref{eqn:sys} to prove existence, uniqueness and stability of an equilibrium for \eqref{eqn:pert_sys}.
 
\begin{theorem}\label{thm:perturbed_stable}
Assume the origin of the system \eqref{eqn:sys} 
is asymptotically stable with domain of attraction $\cA$ and locally exponentially stable. Let $\lowC$ be an arbitrary compact set satisfying $\{0\}\subsetneqq \lowC \subsetneqq \cA$ and 
suppose the function $f$ is
$C^0$ for all $x\in \cA$
 and moreover $C^1$ for all  $x\in \lowC$.
 Then, for any  forward invariant  compact set $\upC$ verifying
$$
\{0\} \subsetneqq   \lowC \subsetneqq   \upC  \subsetneqq  \cA,
$$
there exist a  positive scalar $\delta>0$ such that, for any function $\hat f$ which is
$C^0$ for all $x\in \cA$
and $C^1$ for all  $x\in \lowC$ and  and which satisfies
\begin{equation} \label{eqn:dyn_cond}
    |\hat f(x)-  f(x)|\leq \delta, \qquad \forall x \in \upC,
\end{equation}
\begin{equation}\label{eqn:jcb_cond}
    \left|
\dfrac{\partial \hat f}{\partial x} (x)
-
\dfrac{ \partial  f}{\partial x} (x)
\right| \leq \delta , \qquad \forall x \in \lowC,
\end{equation}
the corresponding  system \eqref{eqn:pert_sys} 
admits an equilibrium point $x_e\in \lowC$,  which is asymptotically stable 
with a domain of attraction containing $\upC$
and locally exponentially stable.
\end{theorem}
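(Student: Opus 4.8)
The plan is to split the argument, as in the continuous-time analogue, into a \emph{local} part near the origin --- where the Jacobian information \eqref{eqn:jcb_cond} and the exponential behaviour are available --- and a \emph{transient} part on $\upC$ --- where only attractivity and $C^0$ data are at hand. For the local part, first observe that local exponential stability of the origin of \eqref{eqn:sys} together with $f\in C^1$ near $0$ forces $A:=\frac{\partial f}{\partial x}(0)$ to be Schur: indeed $A^k=\frac{\partial f^k}{\partial x}(0)$, so $|A^k|\le M\lambda^k\to0$ by the exponential bound on the nominal trajectories. Pick $P\succ0$ and $\rho\in(0,1)$ with $A^\top PA\preceq\rho P$ and set $W(x)=x^\top Px$; using $f(x)=Ax+o(|x|)$ there is $r_0>0$ with $\overline B_{r_0}(0)\subseteq\lowC$ such that $W(f(x))\le\rho'W(x)$ on $\overline B_{r_0}(0)$ for some $\rho'\in(\rho,1)$. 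Since the sublevel sets of $W$ are ellipsoids, hence homeomorphic to balls, Assumption~\ref{asmp:homeo_sphere} holds for $W$ on a small ellipsoid $\cV{\bar c_0}{W}\subseteq\overline B_{r_0}(0)$, and Proposition~\ref{prop:existence_eq} applies: given any target $\underline c_0\le\bar c_0$ it produces $\delta_0>0$ so that every $\hat f$ with $|\hat f-f|<\delta_0$ on $\cV{\bar c_0}{W}$ (implied by \eqref{eqn:dyn_cond} since $\cV{\bar c_0}{W}\subseteq\upC$) has an equilibrium $x_e\in\cV{\underline c_0}{W}\subseteq\lowC$; taking $\underline c_0$ small makes $|x_e|$ as small as desired.

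Next I would establish local exponential stability of $x_e$, uniformly over admissible $\hat f$. Let $\hat A:=\frac{\partial\hat f}{\partial x}(x_e)$. By \eqref{eqn:jcb_cond} (applicable since $x_e\in\lowC$), $|\hat A-\frac{\partial f}{\partial x}(x_e)|\le\delta$; by uniform continuity of $\frac{\partial f}{\partial x}$ on the compact set $\lowC$ and $|x_e|$ small, $|\frac{\partial f}{\partial x}(x_e)-A|$ is small too, so $|\hat A-A|$ is small and, eigenvalues depending continuously on the matrix, $\hat A$ is Schur for $\delta$ small; as $\hat f\in C^1$ near $x_e$, the indirect Lyapunov method gives local exponential stability. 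To obtain a \emph{uniform} domain, write $\hat f(x)-x_e=\hat A(x-x_e)+R(x)$ with $|R(x)|\le\big(2\delta+\omega(|x-x_e|)\big)|x-x_e|$, where $\omega$ is the modulus of continuity of $\frac{\partial f}{\partial x}$ on $\lowC$ (a bound independent of the particular $\hat f$, since $|\frac{\partial\hat f}{\partial x}-\frac{\partial f}{\partial x}|\le\delta$ there). Then $\bar W(x):=(x-x_e)^\top P(x-x_e)$ satisfies $\bar W(\hat f(x))\le\rho''\bar W(x)$ on a ball $\overline B_{\bar r}(x_e)$ with $\bar r>0$, $\rho''\in(0,1)$ uniform, so a fixed ellipsoid $\cV{c^\ast}{\bar W}\subseteq\overline B_{\bar r}(x_e)$ is forward invariant for \eqref{eqn:pert_sys}, attracts $x_e$ exponentially, and contains $\overline B_{\bar r'}(x_e)$ for a uniform $\bar r'>0$.

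It then remains to show that the domain of attraction of $x_e$ contains $\upC$. Since the origin of \eqref{eqn:sys} is asymptotically stable with basin $\cA$ and $\upC\subseteq\cA$ is compact and forward invariant, attraction on $\upC$ is uniform: $\sup_{x\in\upC}|f^k(x)|\to0$, so there is $T\in\NN$ with $f^T(\upC)\subseteq\overline B_{\bar r'/4}(0)$ while $f^j(\upC)\subseteq\upC$ for all $j\le T$. One propagates closeness over this finite horizon: setting $e_j:=|\hat f^j(x)-f^j(x)|$ one has $e_0=0$ and $e_{j+1}\le\delta+\omega_f(e_j)$, with $\omega_f$ the modulus of continuity of $f$ on a compact neighbourhood of the nominal trajectory; since $\omega_f$ is continuous and $\omega_f(0)=0$, the $T$-fold iterate makes $e_T$ a continuous function of $\delta$ vanishing at $\delta=0$. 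Hence for $\delta$ small $\hat f^T(\upC)\subseteq\overline B_{\bar r'/2}(0)\subseteq\overline B_{\bar r'}(x_e)\subseteq\cV{c^\ast}{\bar W}$, so by the preceding step every trajectory of \eqref{eqn:pert_sys} issued from $\upC$ enters the forward-invariant set $\cV{c^\ast}{\bar W}$ within $T$ steps and then converges exponentially to $x_e$. Together with the Lyapunov stability of $x_e$ (from its local exponential stability), this gives asymptotic stability of $x_e$ with domain of attraction containing $\upC$; one finally takes $\delta$ to be the minimum of the finitely many thresholds produced above.

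The main obstacle is this transient step: one must upgrade plain asymptotic stability to \emph{uniform} finite-time attractivity of $\upC$ under the nominal dynamics, and control the perturbed trajectory over the resulting finite horizon using only $C^0$-closeness --- so only the modulus of continuity of $f$ is available, $f$ not being assumed Lipschitz on $\cA$ --- all while keeping the perturbed trajectory inside a set on which the hypotheses on $\hat f$ can be invoked; a forward-invariant sublevel set of a (converse-Lyapunov) function for \eqref{eqn:sys} containing $\upC$ is the natural device for this last point.
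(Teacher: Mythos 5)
Your treatment of existence and local exponential stability of $x_e$ (the first two steps of your plan) is essentially the paper's: a quadratic Lyapunov function built from the Schur linearization at the origin, Brouwer's fixed point theorem on a small forward-invariant ellipsoid (the paper redoes the Brouwer argument instead of citing Proposition~\ref{prop:existence_eq}, but the content is identical), and a perturbation bound on the Jacobian via \eqref{eqn:jcb_cond} together with the modulus of continuity of $\partial f/\partial x$ on $\lowC$, yielding a basin of uniform size around $x_e$. You genuinely diverge on the last step, extending the basin to $\upC$. The paper invokes the smooth converse Lyapunov theorem of \cite{jiang2002converse} twice (once for the origin, once for the attractive set $\upC$), forms the combination $\bV=V_{\upC}+\sigma V_0$, and shows that $\bV$ still strictly decreases along $\hat f$ outside the local ellipsoid, so trajectories issued from $\upC$ are trapped in $\cV{\bar d}{\bV}$ and driven into the local basin; you instead use uniform finite-time attractivity ($f^T(\upC)$ inside a small ball, which does hold uniformly on compact subsets of $\cA$) and propagate $C^0$-closeness over the finite horizon $T$ through the recursion $e_{j+1}\le\delta+\omega_f(e_j)$. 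Your route is more elementary --- no converse Lyapunov machinery --- and is a standard, valid alternative. Its one delicate point, which you correctly flag but do not close, is that the recursion needs $\hat f^{\,j}(x)$ to remain in a set where both the bound $|\hat f-f|\le\delta$ and the modulus $\omega_f$ are available: since $\upC$ is forward invariant only for $f$, after one step the perturbed trajectory may lie in a $\delta$-neighbourhood of $\upC$ on which \eqref{eqn:dyn_cond} is not literally hypothesized, so the finite-horizon estimate must be run on a compact neighbourhood of $\upC$ inside $\cA$ on which the $C^0$ bound is taken to hold. The paper's own proof makes the same implicit strengthening (its decrease estimate for $\bV$ is exploited on $\cV{\bar d}{\bV}\supsetneqq\upC$ while \eqref{eqn:dyn_cond} is stated only on $\upC$), so with that reading the two arguments are on equal footing and yours goes through.
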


\begin{remark}
Differently from \cite[Lemma 5]{astolfi2017integral}, the nominal and perturbed models are required to be $C^1$  inside $\lowC$ solely, while $C^0$ outside. This allows  considering interesting continuous functions, such as saturation functions. 
\end{remark}

\begin{proof}
The proof is organized in several steps. First, we highlight some implications of  system \eqref{eqn:sys} being locally exponentially stable. Indeed, by converse Lyapunov theorem, the quadratic Lyapunov function for the linearized system around the origin is a valid local Lyapunov function for the nonlinear system. Hence, being the level sets of a quadratic function homeomorphic to a sphere, Proposition~\ref{prop:existence_eq} guarantees the existence of a sufficiently small bound $\delta$ such that there exists an equilibrium point $x_e$ for system \eqref{eqn:pert_sys}, arbitrarily close to the origin. However, we want to show its uniqueness and we want to provide more explicit bounds for such a model mismatch. Hence, the first three steps explicitly compute suitable bounds guaranteeing the existence, uniqueness and local exponential stability of an equilibrium for the perturbed system  \eqref{eqn:pert_sys}. Finally, we characterize the size of its domain of attraction
by combining the converse Lyapunov theorems established in \cite{jiang2002converse} 
and techniques similar to the
ones used in the proof of Proposition~\ref{prop:existence_eq}.
Note that the statement of Proposition~\ref{prop:existence_eq}
cannot be employed straightforwardly in the step 4 because 
of the topological obstructions highlighted in 
Section~\ref{sec_topology}.

\paragraph*{Step 1: Local analysis}
Let $\P$ be a positive definite symmetric matrix and $a\in(0,1)$  a real scalar satisfying
\begin{equation*}
    \dfrac{\partial f^\top}{\partial x}(0) \P \dfrac{\partial f}{\partial x}(0) \preceq a \P.
\end{equation*}
Since the origin is a locally exponentially stable equilibrium for system \eqref{eqn:sys}, its linearization around the origin is stable. Hence, the existence of $\P$ is guaranteed by discrete-time Lyapunov inequality. 
In the following, given any $c>0$, we denote with $\cV{c}{\P}$
the subset of $\RR^n$ defined as
\begin{equation}
    \cV{c}{\P}:= \{x\in \RR^n: x^\top \P x \leq c \} \subset\RR^n.
\end{equation}
Now, note that the quadratic Lyapunov function defined by $\P$ is a local Lyapunov function for system \eqref{eqn:sys}. Indeed, by continuity there exists a real number $\varepsilon>0$ such that $\cV{\varepsilon}{\P} \subseteq \lowC $ and
\begin{equation*}
    \dfrac{\partial f^\top}{\partial x}(x) \P \dfrac{\partial f}{\partial x}(x) \preceq \dfrac{1+a}{2} \P
    \qquad\forall x\in \cV{\varepsilon}{\P}.
\end{equation*}
Equivalently (refer to Section~\ref{sec:notation}) it holds
\begin{equation}\label{eqn:contraction_lmi_f}
   \Psi(x) :=
    \begin{pmatrix}
    -\frac{1+a}{2}\P &  \dfrac{\partial f^\top}{\partial x}(x) \P\\
    \P \dfrac{\partial f}{\partial x}(x) & -\P
    \end{pmatrix}
    \preceq 0 
\end{equation}
for all $x\in \cV{\varepsilon}{\P}$.
Consider the candidate Lyapunov function 
$$
V(x)= x^\top \P x, \qquad V(x^+)=f(x)^\top \P f(x).
$$
By defining  a function $F:\RR\to\RR^n$ as
$$
F(s):= f(sx),
$$
and since we assumed the origin to be an equilibrium point for $f$, we have
\begin{align*}
    f(x)= f(x)-f(0)= F(1)-F(0)= \int_0^1 \dfrac{\partial f }{\partial x}( s x) x \;ds.
\end{align*}
Then,
recalling that  $\frac{1+a}{2}<1$, 
we compute
\begin{align} \notag
&V(x^+)-\tfrac{1+a}{2}V(x) = -\frac{1+a}{2} x^\top \P x   \\ \notag
    &\qquad + 2 \int_0^1 x^\top \dfrac{\partial f^\top}{\partial x}(sx) \P f(x) \;ds 
 %   \\ \notag
 %   & \qquad +    f(x)^\top \P \int_0^1  \dfrac{\partial f}{\partial x}(sx) x \;ds 
    - f(x)^\top \P f(x) \\ \notag
    %%%%%%%%%%%%%%%%%%%%%%%%%%%%%%%%%%%%%%%%%%%%%%%%%%%%%%%%%%%%%%%%%%%%%%%%%%%%%%%%%%%%%%%%%%%%%%%%
     & = -\frac{1+a}{2}x^\top \P x \int_0^1  ds +  \int_0^1 x^\top \dfrac{\partial f^\top}{\partial x}(sx) \P f(x) \;ds \\ \notag
    & \qquad +    f(x)^\top \P \int_0^1  \dfrac{\partial f}{\partial x}(sx) x \;ds - f(x)^\top \P f(x)\int_0^1  ds \\ \notag
    %%%%%%%%%%%%%%%%%%%%%%%%%%%%%%%%%%%%%%%%%%%%%%%%%%%%%%%%%%%%%%%%%%%%%%%%%%%%%%%%%%%%%%%%%%%%%%%%
    & = \int_0^1 \Bigg[ -\frac{1+a}{2} x^\top \P x +  x^\top \dfrac{\partial f^\top}{\partial x}(sx) \P f(x) \;ds \\ \notag
    & \qquad +    f(x)^\top \P \dfrac{\partial f}{\partial x}(sx) x  - f(x)^\top \P f(x) \Bigg]\; ds
    \\ \label{eqn:lyap_err_f}
    & = 
\begin{pmatrix}
x^\top & f(x)^\top
\end{pmatrix}
\int_0^1\Psi(sx) 
ds
\begin{pmatrix}
x \\ f(x)
\end{pmatrix}
\leq 0
\end{align} 
for all $x\in \cV{\varepsilon}{\P}$
and $s\in[0,1]$,
where in the last step we used the definition of $\Psi$
in \eqref{eqn:contraction_lmi_f}.
%Moreover, $V(x^+)\le \tfrac{1+a}{2}V(x)$.

\paragraph*{Step 2: Existence of an equilibrium for the perturbed system} Now, let $\hat f$ satisfy
\begin{equation} \label{eqn:model_bound}
|\hat f(x)-  f(x)|\leq  \delta_1 := \sqrt{\dfrac{\varepsilon (1-a)^2}{8\lambda_{max}(\P)(3+a)}},
\end{equation}
for all $x\in \cV{\frac\varepsilon2}{\P}$ . In the following, we show that the existence of such a bound and the local stability of system \eqref{eqn:sys} imply the existence of an equilibrium point $x_e \in \cV{\frac\varepsilon2}{\P}$ for the perturbed system \eqref{eqn:pert_sys}. To this aim, we leverage on Brouwer's fixed point theorem. Hence, we need to show that the set  $\cV{\frac\varepsilon2}{\P}$  is forward invariant for  system \eqref{eqn:pert_sys} and that it is homeomorphic to a unit ball. 

Let us choose $\bar x\in \RR^n $ satisfying $\bar x \in \partial \cV{ \frac{\varepsilon}{2}}{\P}$ and note that for all $ x \in \partial \cV{ \frac{\varepsilon}{2}}{\P}$
\begin{multline*}
      \hat f(x)^\top \P \hat f(x) = 
     \left[\hat f(x) - f(x) \right]^\top \P \left[\hat f(x) - f(x) \right] \\ 
     + f(x)^\top \P f(x) + 2 \left[\hat f(x) - f(x) \right]^\top \P f(x) .
\end{multline*}
Then, using the generalized Young's inequality 
$2\alpha \beta \leq \nu^{-1} \alpha^2 + \nu \beta^2$ with $\nu = \dfrac{1-a}{2(1+a)}$ on the last term,
we obtain
\begin{align*}
      \hat f(x)^\top \P \hat f(x) & \leq   
  \dfrac{3+a}{1-a}\left[\hat f(x) - f(x) \right]^\top\P \left[\hat f(x) - f(x) \right] 
  \\  & \qquad + \dfrac{3+a}{2(1+a)} f(x)^\top \P f(x)
  \\
  & \le \dfrac{1-a}{4} {\bar x}^\top \P {\bar x} + \dfrac{3+a}{4} x^\top \P x\\
    & \le  \dfrac{1-a}{4}{\bar x}^\top \P {\bar x} + \dfrac{3+a}{4} {\bar x}^\top \P {\bar x}\le {\bar x}^\top \P {\bar x},
\end{align*}
where we used inequality \eqref{eqn:model_bound}
in the second step.

Thus,  $\cV{\frac\varepsilon2}{\P}$  is forward invariant for the  system \eqref{eqn:pert_sys}. Moreover, since $\partial \cV{\frac{\varepsilon}{2}}{\P}$ is the level surface of a quadratic Lyapunov function, it is homeomorphic to a sphere. Hence, the set $\cV{\frac\varepsilon2}{\P}$ is homomorphic to a closed unit ball. 
Following the proof of Proposition~\ref{prop:existence_eq} and employing Brouwer's fixed point theorem, there exists a point $x_e\in\cV{\frac{\varepsilon}{2}}{\P}$ satisfying
\begin{equation*}
    \hat f(x_e)=x_e.
\end{equation*}
% Therefore, $x_e\in \cV{\tfrac{\varepsilon}{2}}{\P}\subsetneqq \cV{\varepsilon}{\P} \subseteq \lowC$ is an equilibrium point for system \eqref{eqn:pert_sys}.
\paragraph*{Step 3: Local exponential stability}
Now we show that $x_e \in \cV{\frac\varepsilon2}{\P}$ is locally exponentially stable for \eqref{eqn:pert_sys}. 
In particular, we show that if $\hat f$  satisfies 
\begin{equation} \label{eqn:jacob_bound}
\left|
\dfrac{\partial \hat f}{\partial x} (x)
-
\dfrac{ \partial  f}{\partial x} (x)
\right| \leq \delta_2 :=\dfrac{1-a}{2\sqrt{10+6a}} ,
\end{equation}
for all $x \in\cV{\varepsilon}{\P}$, then, 
by denoting 
\begin{align*}
    \tilde x &:= x-x_e
    \\
 \bar f(\tilde x) & := \hat f(\tilde x + x_e)-\hat f(x_e)=\hat f(x) -x_e,
\end{align*}
the candidate Lyapunov function
$\widehat V(\tilde x) = \tilde x^\top \P \tilde x$
has to satisfy
\begin{align}\label{eqn:lyap_Vtilde+}
    \widehat V({\tilde x}^+)  = 
   {\bar f(\tilde x) }^\top \P {\bar f(\tilde x)} 
    \le \dfrac{3+a}{4} \widehat V(\tilde x)\le \widehat V(\tilde x) 
\end{align}
for all $\tilde x\in \widetilde {\mathcal{V}}_\varepsilon (\P)$, where 
$$
{\widetilde {\mathcal{V}}_\varepsilon (\P)} = \{\tilde  x\in \RR^n: (x_e+s \tilde x )\in \cV{\varepsilon}{\P},\, \forall s\in [0,1] \}
$$
To show such a property, first note that by defining a function $G:\RR\to\RR^n$ as
$G(s)= \hat f(x_e+s\tilde x)$,
it holds
\begin{align*}
{\bar f(\tilde x)} &= G(1)-G(0) = \int_0^1 \dfrac{\partial \hat f}{\partial x}(x_e+s\tilde x)\; ds\;\tilde x.
\end{align*}
 Then, similarly to the previous part of the proof, we compute
\begin{align} \notag
    &\widehat V({\tilde x}^+)-\dfrac{3+a}{4} V({\tilde x})  = -\frac{3+a}{4} {\tilde x}^\top \P {\tilde x}  - {\bar f(\tilde x)}^\top \P \bar f(\tilde x) \\ \notag
    & \qquad +  2  {\bar f(\tilde x)}^\top \P \int_0^1  \dfrac{\partial \hat f}{\partial x}(x_e+s\tilde x) \tilde x \;ds \\ 
    %%%%%%%%%%%%%%%%%%%%%%%%%%%%%%%%%%%%%%%%%%%%%%%%%%%%%%%%%%%%%%%%%%%%%%%%%%%%%%%%%%%%%%%%%%%%%%%%
    & = \label{eqn:lyap_error_fhat}
\begin{pmatrix}
{\tilde x}^\top & {\bar f(\tilde x)}^\top
\end{pmatrix}
\int_0^1\widehat \Psi(x_e+s\tilde x) 
ds
\begin{pmatrix}
 {\tilde x} \\ \bar f(\tilde x)
\end{pmatrix},
\end{align} 
where in the last step we defined 
% Note that \eqref{eqn:lyap_hat_f} is equivalent to
\begin{equation}
   \widehat \Psi(x) :=
    \begin{pmatrix}
    -\frac{3+a}{4}\P &  \dfrac{\partial {\hat f}^\top}{\partial x}(x) \P\\
    \P \dfrac{\partial \hat f}{\partial x}(x) & -\P
    \end{pmatrix}.
    % \preceq 0, \;\; \forall x:x^\top \P x \le \varepsilon.
\end{equation}
 Recalling Section~\ref{sec:notation} and since $a\in(0,1)$, if
\begin{align*}
 \Phi(x) := \begin{pmatrix}
    -\frac{5+3a}{8}\P & \;\; &\dfrac{ \partial \hat f^\top}{\partial x}(x) \P\\
    \P \dfrac{ \partial \hat f}{\partial x}(x) && -\frac{2(3+a)}{5+3a}\P
    \end{pmatrix}
    \preceq 0
\end{align*}
then  $\widehat \Psi(x)\preceq 0$.
Thus, we can study the sign semi-definiteness of $\Phi(x)$ to conclude the sign semi-definiteness of $\widehat \Psi(x)$. By adding and subtracting $\Psi(x)$ to $\Phi(x)$ and recalling inequality \eqref{eqn:contraction_lmi_f},  we obtain, 
$$
\Phi(x)=\Phi(x)-\Psi(x)+\Psi(x)\preceq \Phi(x)-\Psi(x)
$$
 Then,
\begin{align*}
& \Phi(x) 
    \preceq
    \begin{pmatrix}
    \dfrac{a-1}{8} \P &\; &\dfrac{\partial \tilde f^\top}{\partial x}(x)\P\\
    \P \dfrac{\partial \tilde f}{\partial x}(x)   &  & \dfrac{a-1}{5+3a} \P
    \end{pmatrix}.
\end{align*}
with 
$$
\dfrac{\partial \tilde f}{\partial x}(x) :=
\left[\dfrac{ \partial \hat f}{\partial x} (x)-\dfrac{ \partial f}{\partial x}(x) \right].
$$
 By generalized Schur's complement and bound \eqref{eqn:jacob_bound}, since $a\in(0,1)$, the matrix on the right-hand side is negative semi-definite.
Consequently, it holds
\begin{align*}
   \widehat \Psi (x)  &\preceq 0, \qquad \forall x  \in \Omega_{\varepsilon}.
\end{align*}
%Note that the same holds for  $\tilde x \in \Omega_{\frac{\varepsilon}{2}}$ , since $x_e\in\Omega_{\frac{\varepsilon}{2}}$.}
% Note that, $x_e+s\tilde x\in \Omega_\varepsilon$
% for all $(\tilde x, x_e,s)\in \Omega_{\frac{\varepsilon}{2}}\times\Omega_{\frac{\varepsilon}{2}}\times [0,1]$.
As a consequence, we conclude
\begin{align*}%\label{eqn:local_pert_lyap}
   \widehat \Psi (x_e+s\tilde x)\preceq 0
   \qquad \forall \tilde x\in \widetilde \Omega,
\end{align*}
%for all $\tilde x\in \Omega_{\frac{\varepsilon}{2}}$ 
showing  
\eqref{eqn:lyap_Vtilde+}. 
We proved that $x_e$ is 
locally exponentially stable 
for \eqref{eqn:pert_sys} with a domain of attraction including the set $\{\tilde x\in \widetilde {\mathcal{V}}_\varepsilon (\P)\}$.
Note that by the definition of $\widetilde {\mathcal{V}}_\varepsilon (\P)$ and since $\cV{\varepsilon}{\P}$ is convex, the point $x_e+s \tilde x = (1-s)x_e+sx$ belongs to the set $\cV{\varepsilon}{\P}$ for all $(x, x_e,s)\in \cV{\varepsilon}{\P}\times\cV{\varepsilon}{\P}\times [0,1]$ . Hence, the domain of attraction of $x_e\in \cV{\frac{\varepsilon}{2}}{\P}$ contains the set $\{x\in\cV{\varepsilon}{\P}\}$.

\paragraph*{Step 4: Domain of attraction}
We now provide a stronger lower bound on the size of the domain of attraction of the equilibrium point $x_e$ for the perturbed system \eqref{eqn:pert_sys}. In particular, 
we show that $x_e$ has a domain of attraction that includes the set $\upC$. 

First, by picking 
\begin{equation} \label{eqn:delta_local}
    \delta_3=\min\{\delta_1,\delta_2\}
\end{equation} where $\delta_1$ comes from \eqref{eqn:model_bound} and $\delta_2$ is defined in \eqref{eqn:jacob_bound}, we obtain that the point $x_e$ is a locally exponentially stable equilibrium point for \eqref{eqn:pert_sys}. Moreover, it is contained in $\lowC$ and its domain of attraction includes $\cV{\varepsilon}{\P}$. 

Now, since $\upC$ is forward invariant and since  system \eqref{eqn:sys} is time-invariant and described by a continuous function on $\cA$, we can leverage on \cite[Theorem 1]{jiang2002converse} to claim the existence of smooth Lyapunov functions $V_0:\cA\to\RR_{\ge 0}$ and $V_{\upC}: \cA\to\RR_{\ge 0}$ such that for all $x \in \cA$ it holds
\begin{align*}
    &\alpha_1(|x|)\le V_0(x) \le \alpha_2(|x|),\\
    &V_0(f(x)) \le \rho_0 V_0(x), \\
    &V_0(x)=0 \iff x =0,
\end{align*}
and 
\begin{align*}
    &\alpha_1(\dis(x,\upC))\le V_{\upC}(x)\le \alpha_2(\dis(x,\upC)) ,\\
    &V_{\upC}(f(x)) \le \rho_{\upC}  V_{\upC}(x),\\
    &V_{\upC}(x)=0 \iff x \in  \upC,
\end{align*}
 where $\alpha_1,\alpha_2\in \mathcal{K}_\infty$ and $(\rho_0, \rho_\upC) \in (0,1)\times (0,1)$.  
Note that the two Lyapunov functions can be bounded by different $\mathcal{K}_\infty$ functions. Yet, we can select the minimum (maximum) between those and obtain the same lower (upper) bound for the two. 
Consider now the set $\upC$. Since it is compact and included in $\cA$, there exists a strictly positive real number $\bar d$ such that the set $\cD:=\{x\in \cA:\dis(x,\upC)\in (0, \bar d ] \}$ is a subset of $\cA$, namely $\cD\subsetneqq \cA$.  Let
\begin{equation}\label{eqn:comb_lyap}
\bV(x)=  V_{\upC}(x) + \sigma V_0(x),
\end{equation}
with
\begin{equation*}
    \sigma = \dfrac{\alpha_1(\bar d)}{2\nu}, \qquad \nu = \sup_{x\in \cA:\dis(x,\upC) \le  \bar d } V_0(x).
\end{equation*}
By picking $\rho=\max\{\rho_0, \rho_\upC\}$, for all $x \in \cA$ it satisfies
\begin{align}
    &\alpha_3(|x|) \leq \bV(x) \leq \alpha_4(|x|), \notag \\ \label{eqn:comp_lyap_decr}
    &\bV(f(x)) \le \rho \bV(x),\\
    &\bV(x)=0 \iff x=0 \notag
\end{align}
with $\alpha_3,\alpha_4\in \mathcal{K}_\infty$. This implies that $\bV$ is a Lyapunov function for system \eqref{eqn:sys} on $\cA$.
Define the following sets
\begin{equation}
    \label{eq:setV}
    \begin{array}{rcl}
\partial  \cV{\bar d}{\bV} &:=& \{x \in \cA: \bV(x)=\alpha_1(\bar d)\}, 
  \\
 \cV{\bar d}{\bV} &:=& \{x \in \cA: \bV(x)\leq\alpha_1(\bar d)\}.
  \\ 
    \end{array}
\end{equation}
Due to the definitions of $V_\upC$ in \eqref{eqn:comb_lyap} and its lower bound,  we have
\begin{equation*}
    \alpha_1(\dis(x,\upC))\le V_{\upC}(x) \le \bV(x).
\end{equation*}
Hence, for all $x\in \cV{\bar d}{\bV}$
\begin{equation*}
    \dis(x,\upC) \le \bar d.
\end{equation*}
As a consequence,  for all $x\in \cV{\bar d}{\bV}$, we have $\nu \ge V_0(x)$ and by \eqref{eqn:comb_lyap} it holds
\begin{equation*}
    V_{\upC}(x) = \alpha_1(\bar d)\left(1-\dfrac{V_0(x)}{2\nu}\right) > 0.
\end{equation*}
This last relations imply
\begin{equation*}
    \upC \subsetneqq \cV{\bar d}{\bV} \subseteq \cD \subsetneqq \cA.
\end{equation*}
Identify by $\underline v>0$ a scalar such that if $\bV(x)\le \underline v$  then  $x\in\cV{\frac{\varepsilon}{2}}{\P}$, namely
\[
x^\top \P x \le \dfrac{\varepsilon}{2}, \quad \forall x\in\cA:\bV(x)\le \underline v.
\]
Define 
\begin{equation}
    \label{eqn:delta_global}
   \delta_4:=\min\left\{
\inf_{(y',y'')\in\partial\cV{\bar d}{\bV}\times \partial \upC} |y'-y''|
,\dfrac{(1-\rho )\underline v}{\displaystyle\sup_{x\in\cV{\bar d}{\bV}} \left| \dfrac{\partial \bV}{\partial x}(x)\right|}\right\}
\end{equation}
and consider \eqref{eqn:dyn_cond} with \eqref{eqn:delta_global}. 
We can define the function $p:\RR_{\geq 0}\to\RR$ as
\begin{equation*}
p(s) =\max_{x\in\upC, |v|=1} \{V(f(x)+sv)-\alpha_1(\bar d)\}.
\end{equation*}
Then, by the first argument of \eqref{eqn:delta_global} and \eqref{eq:setV}, for all $s\in [0,\delta_4]$ we have $p(s)<0$.
Note that for all $x$ in $\upC$ such that $f(x)\neq\hat f(x)$, 
\begin{equation}
    \bV(\hat f(x))- \alpha_1(\bar d)= V(f(x)+sv)-\alpha_1(\bar d)
\end{equation}
with $s=|\hat f(x)- f(x)|$, $v=\frac{\hat f(x)-f(x)}{|\hat f(x)-f(x)|}$. Consequently, 
 for all $x$ in $\upC$, 
\begin{equation}
    \bV(\hat f(x))-\alpha_1(\bar d) \leq p(|\hat f(x)-f(x)|)\leq 0.
\end{equation}
Thus, $\hat f(x)\in \cV{\bar d}{\bV}$ for all $x\in\upC$ and
\begin{equation*}
    f(x)+s(\hat f(x)-f(x)) \in \cV{\bar d}{\bV}, \quad \forall (x, s)\in \upC\times[0,1]. 
\end{equation*}
Then, by picking $s\in[0,1]$ and a function $H:\RR\to\RR$ as 
$H(s)=V(f(x)+s(\hat f(x) -f(x)))$, we have
\begin{align*}
    &\bV(\hat f(x))- \bV(f(x))= H(1)-H(0)\\
    &= \int_0^1 \dfrac{\partial \bV }{\partial x}
    \Big( f(x)+s(\hat f(x) -f(x))\Big)  \;ds \, \big[\hat f(x)-f(x)\big].
    \end{align*}
Consequently, 
for all $x\in \cV{\bar d}{\bV}$,
we obtain
\begin{align*}
    \bV(\hat f(x))&\leq  \bV(f(x)) 
        \\
    % & +   
    % \int_0^1 \sup_{x\in \upC} 
    % \Bigg|\dfrac{\partial \bV}{\partial x}(f(x)+s(\hat f(x)-f(x))\Bigg | \, ds \,\\& \qquad \times \big|\hat f(x)-f(x)\big|
    % \\
    % &\leq  \bV(f(x)) 
    %     \\
  &  +   
    \int_0^1 \sup_{x\in \cV{\bar d}{\bV}} 
    \Bigg|\dfrac{\partial \bV}{\partial x}(x)\Bigg | \, ds \, \big|\hat f(x)-f(x)\big|
    \\
    &<\rho \bV(x) 
    + \sup_{x\in \cV{\bar d}{\bV}} \left|\dfrac{\partial \bV}{\partial x}(x)\right|
    \delta 
    \\
      &< \rho \bV(x)+ (1-\rho ) \underline v\,. %\displaystyle\inf_{x\in\cD_{\underline v}} \bV(x)% \alpha_1(\bar d)\\
     %&\le \rho \alpha_1(\bar d)+ (1-\rho )\alpha_1(\bar d) \le \alpha_1(\bar d)
\end{align*}
Since $\rho \in (0,1)$, it holds
\begin{equation} \label{eqn:fwd_inv}
\begin{cases}
    \bV(\hat f(x))<  \bV(x)  & \forall x \in \upC\setminus \cV{\frac{\varepsilon}{2}}{\P}\\
    \bV(\hat f(x))<  \underline v \; & \forall x \in  \cV{\frac{\varepsilon}{2}}{\P}.
\end{cases}
\end{equation}
Therefore, we may pick any function $\hat f$
satisfying conditions \eqref{eqn:dyn_cond}, \eqref{eqn:jcb_cond}
with 
$$\delta < \min\{\delta_3,\delta_4\}.$$
% we obtain 
% \begin{align*}
%     \bV(\hat f(x))&< \rho \bV(x) + (1-\rho )\inf_{x\in\cD_{\underline v}}\bV(x)\\
%     &< \rho \bV(x) + (1-\rho )\bV(x)\\
%     &< \bV(x),
% \end{align*}
%  for all $x\in \upC\setminus \Omega_{\frac{\varepsilon}{2}}$.
The analysis shows that bound \eqref{eqn:delta_global} ensures trajectories of the perturbed system \eqref{eqn:pert_sys} initialized in $\upC$ converge to $\cV{\frac{\varepsilon}{2}}{\P}$. Moreover, we previously proved $\cV{\varepsilon}{\P}$ is included in the exponentially stable domain of attraction of $x_e$. Hence, $x_e$ is an asymptotically stable equilibrium for \eqref{eqn:pert_sys} and it is locally exponentially stable with domain of attraction including $\upC$. 
\end{proof}
% \begin{remark}\color{red}
% An alternative proof of Theorem~\ref{thm:perturbed_stable} comes from  application of Proposition~\ref{prop:existence_eq}. Indeed, the existence of the equilibrium is guaranteed by the local Lyapunov function, whose sublevel sets are homeomorphic to the unit ball. Then, similarly to Step 3 in proof of Theorem~\ref{thm:perturbed_stable}, the uniqueness of such equilibrium is proved. Finally, Step 4 could  be substituted by application of techniques similar to the ones used for Proposition~\ref{prop:existence_eq}. This allows avoiding the use of converse Lyapunov theorems. On the other hand, it requires proving the existence of a function as in Assumption~\ref{asmp:homeo_sphere}, whose sublevel set includes $\upC$ and is included in $\cA$. Moreover, it would only provide the existence of a scalar $\delta>0$ such that Theorem~\ref{thm:perturbed_stable} holds. On the contrary, if the Lyapunov functions derived from the converse Lyapunov theorem are known, our proofs provides more precise bounds on such model and Jacobian mismatches.
% \end{remark} 

Theorem~\ref{thm:perturbed_stable} states that, in presence of bounded mismatches, stability of an equilibrium is preserved for the perturbed system \eqref{eqn:pert_sys}. This new equilibrium may not coincide with  the original one, yet it is guaranteed to be in its neighborhood.
As a particular example, we may study the case in which 
$x^+ = f(x)+\delta$ with 
$\delta$ being a constant small perturbation.
In such scenario, the equilibrium of the perturbed system slightly shifts (it is actually computed as the solution $x_e$ to  $f(x_e)+\delta=x_e$), and we recover the result proposed in \cite[Theorem 2.7]{lin1994design}. Hence, even if we recall that the proof proposed by the authors is either incorrect or missing some arguments, we believe that the general statement may be correct.

Finally, we highlight that the existence of an equilibrium (although different from the nominal one) may be of great value in some applications. A direct example is the control of systems via integral-action, see e.g., \cite{astolfi2017integral,giaccagli2021sufficient, abdelrahem2018efficient}, as detailed in the next section.

\section{Application to integral action control}\label{sec:regulation}

%  \subsection{On integral action}
As stated at the end of the previous section, the results of Theorem~\ref{thm:perturbed_stable} can be of interest to the output regulation problem and, in particular, in the context of integral-action controllers. Such control schemes are commonly employed for (constant) perturbation rejection or (constant) reference tracking. The most trivial examples are PI and PID laws. In the following, we specialize our previous result to design an integral action for discrete-time nonlinear systems.

In particular, consider the controlled extended system
\begin{equation}
    \label{eqn:sys_int}
    \left\{
 \begin{aligned}
       \xi^+ =& \varphi(\xi) + g(\xi,u), \quad y=h(\xi),\\
    z^+ =& z + k(\xi,y)      
 \end{aligned}
\right.
\end{equation}
where
$\xi\in \RR^{q}$ is the state of the controlled plant, $u\in \RR^m$ is the control input, 
$y\in \RR^p$ is the output to be regulated to zero 
and $z\in \RR^p$ represents a discrete-time generalized integral action.
Furthermore, we suppose that 
$h(0)=0$, $g(\xi,0)=0$, and $p\le m$. 
The function  $k$ is a $C^1$ function satisfying
the following set of conditions
\begin{equation}\label{eqn:gen_integr}
\begin{array}{rl}
    k(\xi,y) = 0 &\iff y = 0\\[.5em]
    |k(\xi,y_a)-k(\xi,y_b)| &\le L_1(\xi)|y_a-y_b|\\
    \left|\dfrac{\partial  k}{\partial \xi}(\xi, y_a) -\dfrac{\partial k}{\partial \xi}(\xi, y_b)\right|&\le L_2(\xi) \left| y_a - y_b\right|
\end{array}
\end{equation}
for all $\xi\in \RR^q$, for all $(y_a,y_b) \in \RR^p\times\RR^p$ and for some continuous functions $L_1,L_2:\RR^n\to\RR_{\ge 0}$. It can be easily seen that 
by selecting $k(\xi,y)=y$ 
we recover 
the standard discrete-time integrator
$$
z^+ = z+y
$$
that trivially satisfies the previous requirements with $L_1(\xi):=1, \, L_2(\xi):=0$ for all $\xi$. 
Under a feedback controller $u=\alpha(\xi,z)$, we can define the extended dynamics for $x = (\xi^\top , z^\top)^\top$ as
\begin{equation}\label{eqn:ext_sys}
    x^+ = f(x) := \begin{pmatrix}
    \varphi(\xi) + g(\xi,\alpha(\xi,z)) \\ z + k(\xi,h(\xi))
    \end{pmatrix},
\end{equation}
and the perturbed model
\begin{equation}\label{eqn:pert_ext_sys}
    x^+ = \hat f(x) := \begin{pmatrix}
    \hat \varphi(\xi) + \hat g(\xi,\alpha(\xi,z)) \\ z + k(\xi,\hat h(\xi,\alpha(\xi,z))
    \end{pmatrix}.
\end{equation}
Thus, the goal is to study the robustness properties of the closed-loop $x^+=f(x)$.
To this end, we introduce the following compact notation that will be used in the following sections:
\begin{equation}\label{eq:Delta_notation}
    \begin{aligned}
            \Delta_\xi(x) & :=  |\hat \varphi(\xi)-  \varphi(\xi) + \hat g(\xi,\alpha(\xi,z)) - g(\xi,\alpha(\xi,z))|,\\
        \Delta_y (x) & :=  |\hat h(\xi,\alpha(\xi,z)) - h(\xi)|,
        \\
        \Delta_z(x) & := |k(\xi,\hat(\xi,\alpha(\xi,z))) - 
k(\xi,h(\xi) |,
\\
                \Delta_{\partial \xi}(x) &: = \left|  \dfrac{\partial \hat \varphi}{\partial \xi} (\xi) - \dfrac{ \partial  \varphi}{\partial \xi} (\xi) \right.
                \\ & \left.\qquad +  \dfrac{\partial \hat g}{\partial \xi} (\xi,\alpha(\xi,z))  - \dfrac{ \partial  g}{\partial \xi} (\xi,\alpha(\xi,z))  \right| , 
                \\
                \Delta_{\partial u} (x)& :=  \left|\dfrac{\partial \hat g}{\partial u} (\xi,\alpha(\xi,z)) - \dfrac{ \partial  g}{\partial u} (\xi,\alpha(\xi,z)) \right|,
                \\ 
                \Delta_{\partial y} (x) &: = \left| \dfrac{\partial \hat h}{\partial \xi} (\xi,\alpha(\xi,z)) - \dfrac{ \partial  h}{\partial \xi} (\xi) \right| +  \left|\dfrac{\partial \hat h}{\partial u} (\xi,\alpha(\xi,z))\right|. 
        \end{aligned}
        \end{equation}

 \subsection{Existence of equilibria}
We first study the existence of equilibria where regulation is achieved for the perturbed system \eqref{eqn:pert_ext_sys}. Hence, we assume $\varphi, \, g, \, \alpha, \, h$ to be continuous functions.
\begin{proposition} \label{prop:existence_eq_ext}
Let Assumption~\ref{asmp:homeo_sphere} hold for the closed-loop  system \eqref{eqn:ext_sys}.
Then, for any positive $\underline c\leq  \bar c$ there exists a  positive real number $\bar \delta>0$ such that, for any  continuous functions $\hat \varphi:\RR^q\to\RR^q,\, \hat g:\RR^q\times\RR^m\to\RR^q,\, \hat h:\RR^q\times\RR^m\to\RR^p$ 
satisfying
\begin{equation} \label{eqn:dyn_cond_int_ext}
    \Delta_\xi(x) + \Delta_y(x) \leq \bar \delta,
    \qquad \forall\, x\in\cV{\bar c}{V},
\end{equation}
with $\Delta_\xi,\Delta_y$ defined in \eqref{eq:Delta_notation},
the corresponding
system \eqref{eqn:pert_ext_sys}
 admits an equilibrium point $(\xi_e,z_e) \in \cV{\underline c}{V}$ on which output
 regulation is achieved, i.e. 
  $$\hat h(\xi_e,\alpha(\xi_e,z_e))=0.$$
Furthermore, system \eqref{eqn:pert_ext_sys}
has no other 
 equilibrium in the set
 $\cV{\bar c}{V}\setminus\cV{\underline c}{V}$.
\end{proposition}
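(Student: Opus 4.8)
The plan is to read this statement as a direct application of Proposition~\ref{prop:existence_eq} to the closed-loop vector field $f$ in \eqref{eqn:ext_sys} and its perturbation $\hat f$ in \eqref{eqn:pert_ext_sys}. Since Assumption~\ref{asmp:homeo_sphere} is assumed to hold for \eqref{eqn:ext_sys}, that proposition provides, for the given $\underline c\le\bar c$, a radius $\delta>0$ such that every continuous $\hat f$ with $|\hat f(x)-f(x)|<\delta$ on $\cV{\bar c}{V}$ admits an equilibrium in $\cV{\underline c}{V}$ and no other one in $\cV{\bar c}{V}\setminus\cV{\underline c}{V}$. It therefore suffices to dominate $|\hat f(x)-f(x)|$ by the quantity $\Delta_\xi(x)+\Delta_y(x)$ appearing in \eqref{eqn:dyn_cond_int_ext} up to a multiplicative constant, and then to extract the regulation identity from the integral-action structure.

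First I would write out, from \eqref{eqn:ext_sys}--\eqref{eqn:pert_ext_sys},
\begin{equation*}
\hat f(x)-f(x)=
\begin{pmatrix}
\hat\varphi(\xi)-\varphi(\xi)+\hat g(\xi,\alpha(\xi,z))-g(\xi,\alpha(\xi,z))\\[.3em]
k(\xi,\hat h(\xi,\alpha(\xi,z)))-k(\xi,h(\xi))
\end{pmatrix},
\end{equation*}
so that $|\hat f(x)-f(x)|\le\Delta_\xi(x)+|k(\xi,\hat h(\xi,\alpha(\xi,z)))-k(\xi,h(\xi))|$ with $\Delta_\xi$ as in \eqref{eq:Delta_notation}. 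The second term is not directly controlled by \eqref{eqn:dyn_cond_int_ext}, but the Lipschitz property of $k$ in \eqref{eqn:gen_integr} gives $|k(\xi,\hat h(\xi,\alpha(\xi,z)))-k(\xi,h(\xi))|\le L_1(\xi)\,\Delta_y(x)$. The set $\cV{\bar c}{V}$ is bounded (being homeomorphic to the unit ball) and closed (being a sublevel set of the continuous $V$), hence compact; thus $\bar L_1:=\max_{x\in\cV{\bar c}{V}}L_1(\xi)$ is finite and, for all $x\in\cV{\bar c}{V}$,
\begin{equation*}
|\hat f(x)-f(x)|\le\Delta_\xi(x)+\bar L_1\,\Delta_y(x)\le \max\{1,\bar L_1\}\big(\Delta_\xi(x)+\Delta_y(x)\big).
\end{equation*}

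Next I would set $\bar\delta:=\delta/\big(2\max\{1,\bar L_1\}\big)$, so that condition \eqref{eqn:dyn_cond_int_ext} forces $|\hat f(x)-f(x)|\le\delta/2<\delta$ on $\cV{\bar c}{V}$; Proposition~\ref{prop:existence_eq} then yields an equilibrium $x_e=(\xi_e,z_e)\in\cV{\underline c}{V}$ of \eqref{eqn:pert_ext_sys} with no further equilibrium in $\cV{\bar c}{V}\setminus\cV{\underline c}{V}$. It remains to establish regulation: reading the last $p$ components of $\hat f(x_e)=x_e$ in \eqref{eqn:pert_ext_sys} gives $z_e=z_e+k(\xi_e,\hat h(\xi_e,\alpha(\xi_e,z_e)))$, i.e. $k(\xi_e,\hat h(\xi_e,\alpha(\xi_e,z_e)))=0$, and by the first line of \eqref{eqn:gen_integr}, $k(\xi,y)=0$ if and only if $y=0$, hence $\hat h(\xi_e,\alpha(\xi_e,z_e))=0$, which is exactly output regulation at $x_e$.

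I do not expect a genuine obstacle here: the topological and fixed-point content is entirely inherited from Proposition~\ref{prop:existence_eq}, and the only point requiring some care is the compactness argument making $L_1$ bounded on $\cV{\bar c}{V}$, so that the integral-action increment $\Delta_z$ is dominated by $\Delta_y$; the structural hypothesis $k(\xi,y)=0\iff y=0$ plays precisely the role of upgrading the bare existence of a fixed point into the regulation identity $\hat h(\xi_e,\alpha(\xi_e,z_e))=0$.
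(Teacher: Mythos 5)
Your proposal is correct and follows essentially the same route as the paper: reduce to Proposition~\ref{prop:existence_eq} by bounding $|\hat f(x)-f(x)|$ via the Lipschitz property of $k$ in \eqref{eqn:gen_integr} (so that $\Delta_z$ is dominated by $\Delta_y$ times a constant obtained from compactness of $\cV{\bar c}{V}$), rescale $\bar\delta$ accordingly, and then extract $\hat h(\xi_e,\alpha(\xi_e,z_e))=0$ from the fixed-point equation of the integrator component together with $k(\xi,y)=0\iff y=0$. The only cosmetic difference is that the paper routes the componentwise bound through a norm-equivalence constant $\ell$ with $|\hat f-f|\le\ell(1+L)(\Delta_\xi+\Delta_y)$, whereas you use the triangle inequality directly; both yield the same conclusion.
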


\begin{proof}
The proof relies on Proposition~\ref{prop:existence_eq}.  It can be easily checked that \eqref{eqn:dyn_cond_int_ext} implies \eqref{eqn:dyn_cond} for the extended system \eqref{eqn:ext_sys}. Since $\RR^q\times\RR^p$ is a finite dimensional space, all norms are equivalent. Hence, there exist a strictly positive constant $\ell$ such that $|x|\leq \ell \sum_{i=1}^n |x_i|$
where $x_i$ denotes the $i$-th component 
of $x$.
Hence, recalling the compact notation 
\eqref{eq:Delta_notation}, 
we have
\begin{align*}
    |\hat f(x)-f(x)|& \le \ell (\Delta_\xi(x) + \Delta_z(x))
\end{align*}
for all $x\in \RR^{q+p}$.
By defining 
$$
L := \sup_{\xi\in\cV{\bar c}{V}} L_1(\xi) \ge 0,
$$
we obtain 
$|\Delta_z(x)| \leq L |\Delta_y(x)|$ for all
$x\in \cV{\bar c}{V}$
and therefore
equation \eqref{eqn:gen_integr} yields
\begin{align*}
        |\hat f(x)-f(x)| \leq \ell (1+L)  (\Delta_\xi(x) + \Delta_y(x))
\end{align*}
for all $x\in \cV{\bar c}{V}$.
Hence, \eqref{eqn:dyn_cond_int_ext} implies \eqref{eq:prop1_condition} if 
$$
\bar \delta < \dfrac{\delta}{\ell(1+L)}.
$$
Then, Proposition~\ref{prop:existence_eq} guarantees the existence of an equilibrium $(\xi_e,z_e)\in \cV{\underline c}{V}.$ In such an equilibrium, $z_e^+=z_e$ and consequently
$k(\xi,\hat h(\xi_e,\alpha(\xi_e,z_e))) = 0$.
By \eqref{eqn:gen_integr},  the last relation implies $\hat h(\xi_e,\alpha(\xi_e,z_e)) = 0$ and this concludes the proof.
\end{proof}

Proposition~\ref{prop:existence_eq_ext} is a direct application of  Proposition~\ref{prop:existence_eq} to the extended system \eqref{eqn:ext_sys}. Hence, it guarantees the existence of (at least one) equilibrium for the perturbed closed-loop dynamics \eqref{eqn:pert_ext_sys}. In turn, due to the presence of the integral action, the outputs $y$ are still regulated to $0$, even in presence of model mismatches. Thus, Proposition~\ref{prop:existence_eq_ext} shows that the addition of discrete-time generalized integrator dynamics to discrete-time nonlinear systems guarantees the existence of equilibria where (constant) reference tracking and (constant) disturbance rejection is achieved. Moreover, it shows that controllers designed to generate equilibrium points for the nominal extended closed-loop effectively guarantee the existence of equilibria for the perturbed dynamics.

 \subsection{Robust regulation}
Proposition~\ref{prop:existence_eq_ext} does not provide any characterization of the attractivity properties of the equilibrium and hence of the regulation objective $y\to0$. As a matter of fact, the generation of attractive and stable equilibria is often the main objective of control designs. Hence, we build on Theorem~\ref{thm:perturbed_stable} to show that the addition of a discrete-time integral component to stabilizing controllers allows for robust asymptotic (constant) reference tracking and (constant) disturbance rejection.

\begin{assumption} \label{asmp:open_loop_stable}
There exists an open set $\cA\subseteq\RR^q\times \RR^p$ and a $C^1$ function $\alpha:\RR^q\times\RR^p\to\RR^m$ such that the control law $u=\alpha(\xi,z)$ makes the origin of system \eqref{eqn:ext_sys} asymptotically stable with domain of attraction $\cA$ and locally exponentially stable.
\end{assumption}
% Assume there exists  a Lyapunov function $W:\RR^n\to\RR_{\ge 0}$ ensuring global asymptotic stablility and local exponential stability of  the origin of system $x^+=f(x)$.
 
Under Assumption~\ref{asmp:open_loop_stable}, we can infer robust setpoint-tracking properties for the extended system \eqref{eqn:ext_sys} by means of the results in the previous section.

\begin{proposition}\label{prop:robust_integral}
Let Assumption~\ref{asmp:open_loop_stable} hold and let $\lowC$ be an arbitrary compact subset of $\cA$ including the origin. Moreover, let functions $\varphi,g, h$ be
 $C^0$ for all $(\xi,z)\in \cA$
and moreover  
 $C^1$ for all $(\xi,z)\in \lowC$. Then, for any forward invariant   compact set $\upC$ verifying
$$
\{0\} \subsetneqq   \lowC \subsetneqq   \upC  \subsetneqq  \cA,
$$
there exist a strictly positive scalar $\bar\delta$ such that, for 
 functions $\hat \varphi, \hat g, \hat h$ 
 that are
 $C^0$ for all $(\xi,z)\in \cA$
and   
 $C^1$ for all $(\xi,z)\in \lowC$, 
 and moreover satisfy
 \begin{align}
  \Delta_\xi(x) + \Delta_y (x) & \leq \bar \delta,     &\forall x\in \upC, 
  \label{eqn:dyn_cond_int} 
  \\
  \label{eqn:jcb_cond_int}
     \Delta_{\partial \xi}(x) + \Delta_{\partial u} (x) + \Delta_{\partial y} (x) &\le \bar \delta,
      & \forall\, x\in \lowC,
 \end{align}
 with $ \Delta_\xi, \Delta_y, \Delta_{\partial \xi}, \Delta_{\partial u}, \Delta_{\partial y}$
defined in \eqref{eq:Delta_notation},
the corresponding system 
 \eqref{eqn:pert_ext_sys}
admits an equilibrium point $(\xi_e,z_e)\in \lowC$  which is locally exponentially stable and asymptotically stable with domain of attraction containing $\upC$. Moreover, on such equilibrium,   $\hat h(\xi_e,\alpha(\xi_e,z_e))=0$.
\end{proposition}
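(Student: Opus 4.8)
The plan is to obtain Proposition~\ref{prop:robust_integral} as an application of Theorem~\ref{thm:perturbed_stable} to the extended closed-loop system \eqref{eqn:ext_sys}, exactly in the spirit in which Proposition~\ref{prop:existence_eq_ext} was deduced from Proposition~\ref{prop:existence_eq}. First I would check that \eqref{eqn:ext_sys} meets the nominal hypotheses of Theorem~\ref{thm:perturbed_stable}: by Assumption~\ref{asmp:open_loop_stable} its origin is asymptotically stable with domain of attraction $\cA$ and locally exponentially stable, and since $\alpha$ and $k$ are $C^1$ while $\varphi,g,h$ are $C^0$ on $\cA$ and $C^1$ on $\lowC$, the map $f$ in \eqref{eqn:ext_sys} is $C^0$ on $\cA$ and $C^1$ on $\lowC$, and likewise for $\hat f$ in \eqref{eqn:pert_ext_sys}. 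It then remains to show that, for $\bar\delta$ small enough, conditions \eqref{eqn:dyn_cond_int}--\eqref{eqn:jcb_cond_int} force the bounds \eqref{eqn:dyn_cond}, \eqref{eqn:jcb_cond} with the $\delta$ provided by that theorem.

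For the $C^0$ bound the argument is the one already used in the proof of Proposition~\ref{prop:existence_eq_ext}: by equivalence of norms on $\RR^q\times\RR^p$ and the notation \eqref{eq:Delta_notation} one has $|\hat f(x)-f(x)|\le \ell(\Delta_\xi(x)+\Delta_z(x))$ on $\upC$, and the first two properties in \eqref{eqn:gen_integr} give $\Delta_z(x)\le L\,\Delta_y(x)$ with $L$ the supremum of $L_1$ over the (compact) $\xi$-projection of $\upC$, hence $|\hat f(x)-f(x)|\le \ell(1+L)(\Delta_\xi(x)+\Delta_y(x))\le \ell(1+L)\bar\delta$, so \eqref{eqn:dyn_cond} holds once $\bar\delta<\delta/(\ell(1+L))$.

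The technical core is the Jacobian bound \eqref{eqn:jcb_cond}. Writing $\partial f/\partial x$ and $\partial\hat f/\partial x$ blockwise along the split $x=(\xi,z)$ and applying the chain rule to the composite maps $g(\xi,\alpha(\xi,z))$, $k(\xi,h(\xi))$ and $k(\xi,\hat h(\xi,\alpha(\xi,z)))$, each block of $\partial\hat f/\partial x-\partial f/\partial x$ becomes a finite sum of terms controlled either directly by $\Delta_{\partial\xi}$, $\Delta_{\partial u}$, $\Delta_{\partial y}$, multiplied by the suprema of $|\partial\alpha/\partial\xi|$, $|\partial\alpha/\partial z|$, $|\partial h/\partial\xi|$ and $L_1$ over the relevant compact set (all finite, since $\alpha,h$ are $C^1$ and $L_1$ continuous), or, in the $\partial k/\partial\xi$ terms, by $L_2\,\Delta_y$ via the third Lipschitz property in \eqref{eqn:gen_integr}. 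The one term not covered by a Lipschitz estimate is $[\partial k/\partial y(\xi,\hat h)-\partial k/\partial y(\xi,h)]\,\partial h/\partial\xi$, arising when $k(\xi,\hat h(\cdot))$ is differentiated in $\xi$; since $k$ is $C^1$, $\partial k/\partial y$ is uniformly continuous on the compact set swept by $(\xi,h(\xi))$ together with the $\hat h$-values, which lie within $\Delta_y\le\bar\delta$ of $h(\xi)$, so that term is bounded by $\omega(\bar\delta)\sup|\partial h/\partial\xi|$ for a modulus $\omega$ with $\omega(0^+)=0$. Collecting everything yields $|\partial\hat f/\partial x-\partial f/\partial x|\le C\bar\delta+C\omega(\bar\delta)$ on $\lowC$ for some constant $C$, which is below $\delta$ for $\bar\delta$ small. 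I expect this blockwise chain-rule bookkeeping, and specifically the $\partial k/\partial y$ continuity step, to be the main obstacle.

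Choosing $\bar\delta$ so that the two reductions make \eqref{eqn:dyn_cond} and \eqref{eqn:jcb_cond} hold with the $\delta$ of Theorem~\ref{thm:perturbed_stable}, that theorem yields an equilibrium $(\xi_e,z_e)\in\lowC$ of \eqref{eqn:pert_ext_sys} which is locally exponentially stable and asymptotically stable with domain of attraction containing $\upC$. Finally, reading the $z$-block of $\hat f(\xi_e,z_e)=(\xi_e,z_e)$ gives $k(\xi_e,\hat h(\xi_e,\alpha(\xi_e,z_e)))=0$, and the equivalence $k(\xi,y)=0\iff y=0$ in \eqref{eqn:gen_integr} then forces $\hat h(\xi_e,\alpha(\xi_e,z_e))=0$, i.e.\ output regulation is achieved at the equilibrium, which concludes the proof.
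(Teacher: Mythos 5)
Your proposal is correct and follows the same overall route as the paper: reduce to Theorem~\ref{thm:perturbed_stable} for the extended closed-loop system, convert \eqref{eqn:dyn_cond_int} into \eqref{eqn:dyn_cond} via norm equivalence and the Lipschitz bound $\Delta_z\le L\,\Delta_y$ from \eqref{eqn:gen_integr}, convert \eqref{eqn:jcb_cond_int} into \eqref{eqn:jcb_cond} by blockwise chain-rule bookkeeping with the constants $\sup|\partial\alpha/\partial\xi|$, $\sup|\partial\alpha/\partial z|$, $\sup L_1$, $\sup L_2$, and read off $\hat h(\xi_e,\alpha(\xi_e,z_e))=0$ from the $z$-block of the fixed-point equation together with $k(\xi,y)=0\iff y=0$. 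The one place where you genuinely diverge is the Jacobian estimate: you isolate the cross-term $\bigl[\partial k/\partial y(\xi,\hat h)-\partial k/\partial y(\xi,h)\bigr]\,\partial h/\partial\xi$, which is not controlled by any of the Lipschitz conditions in \eqref{eqn:gen_integr} (the third condition only governs $\partial k/\partial\xi$), and you bound it by a modulus of continuity $\omega(\Delta_y)\le\omega(\bar\delta)$ using uniform continuity of $\partial k/\partial y$ on a compact neighbourhood of the graph of $h$, noting $\Delta_y\le\bar\delta$ on $\lowC\subset\upC$ from \eqref{eqn:dyn_cond_int}. The paper's displayed estimate, which concludes with the factor $\bar\ell(1+2(L_\alpha+L_k+L_\alpha L_k))(\Delta_{\partial\xi}+\Delta_{\partial u}+\Delta_{\partial y})$, silently drops this term (it vanishes for the standard integrator $k(\xi,y)=y$, but not for a general $C^1$ generalized integrator). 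Your version therefore costs a slightly less explicit bound --- $C\bar\delta+C\omega(\bar\delta)$ rather than a purely linear one in $\bar\delta$ --- but it actually closes a small gap in the paper's argument; the conclusion is unaffected since $\omega(0^+)=0$.
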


\begin{proof}
To prove the existence of an asymptotically stable equilibrium for \eqref{eqn:pert_ext_sys} with a locally exponential behavior, we rely on Theorem~\ref{thm:perturbed_stable}. As in the proof of Proposition~\ref{prop:existence_eq_ext}, it can be easily checked that \eqref{eqn:dyn_cond_int} implies \eqref{eqn:dyn_cond} for the extended system \eqref{eqn:ext_sys}. By following the same steps, we define
% $\upC_\xi \in \RR^q, \upC_z \in \RR^p$ as the compact sets such that $\upC = \upC_\xi \times \upC_z$ and 
$$
% L = \sup_{\xi\in\upC_{\xi}} L_1(\xi) \ge 0.
L = \sup_{\xi\in\upC} L_1(\xi) \ge 0.
$$
With the same definition of $\ell$ as in the proof of Proposition~\ref{prop:existence_eq_ext}, \eqref{eqn:dyn_cond_int} implies \eqref{eqn:dyn_cond} for al $x\in\upC$ if 
$$
\delta < \dfrac{\min\{\delta_3,\delta_4\}}{\ell(1+L)},
$$
with $\delta_3,\delta_4$ defined in Step 4 of the proof of Theorem~\ref{thm:perturbed_stable}.

Similarly, we can check that \eqref{eqn:jcb_cond_int} implies \eqref{eqn:jcb_cond} for the extended system \eqref{eqn:ext_sys}. Let us define  
% $\lowC_\xi \in \RR^q, \lowC_z \in \RR^p$ as the compact sets such that $\lowC = \lowC_\xi \times \lowC_z$ and
\begin{align*}
    L_\alpha&=\sup_{(\xi,z)\in \lowC} \left\{\Bigg|\dfrac{\partial \alpha}{\partial \xi}(\xi,z)\Bigg|,\Bigg|\dfrac{\partial \alpha}{\partial z}(\xi,z)\Bigg|\right\}\\
    % L_k&=\sup_{\xi\in \lowC_\xi} \left\{L_1(\xi), L_2(\xi)\right\}\ge L.
    L_k&=\sup_{\xi\in \lowC} \left\{L_1(\xi), L_2(\xi)\right\}\ge L.
\end{align*} 
To show this, let $\bar \ell>0$ such that
$|A|\leq \bar\ell \sum_{i=1}^n |A_i|$
where $A$ is any (rectangular) matrix and $A_i$
denotes the $i$-th row.
Then, 
by recalling the notation introduced in 
\eqref{eq:Delta_notation} and using the previous inequalities, 
we obtain
\begin{align*}
    \Bigg|\dfrac{\partial \hat f}{\partial x}&(x)  - \dfrac{\partial  f}{\partial x}(x)\Bigg| 
    \\
    & \le \bar\ell  \Delta_{\partial \xi}(x) 
    + 2\bar \ell L_\alpha  \Delta_{\partial u}(x) 
    +2 \bar\ell L_\alpha L_k\Bigg| \dfrac{\partial \hat h}{\partial u}(\xi,\alpha(\xi,z))\Bigg|
    \\& 
    \qquad + 2\bar\ell L_k\Bigg|\dfrac{\partial \hat h}{\partial \xi}(\xi,\alpha(\xi,z)) - \dfrac{\partial  h}{\partial \xi}(\xi)  
    \Bigg|
    \\ 
    &\le \bar \ell(1+2(L_\alpha+L_k+L_\alpha L_k)) (\Delta_{\partial \xi} +\Delta_{\partial u}  +\Delta_{\partial y})
\end{align*}
for all $x\in\lowC$.
% \begin{align*}
%     \Bigg|\dfrac{\partial \hat f}{\partial x}(x)- &\dfrac{\partial  f}{\partial x}(x)\Bigg| \\ & \;\le \ell_\partial\Bigg|
%     \dfrac{\partial \hat \varphi}{\partial \xi}(\xi) - \dfrac{\partial \varphi}{\partial \xi}(\xi)\\& \;
%     + \dfrac{\partial \hat g}{\partial \xi}(\xi,\alpha(\xi,z)) -\dfrac{\partial g}{\partial \xi}(\xi,\alpha(\xi,z))  \Bigg| \\& \; 
%     +2\ell_\partial L_\alpha\left|\dfrac{\partial \hat g}{\partial u}(\xi,\alpha(\xi,z)) - \dfrac{\partial  g}{\partial u}(\xi,\alpha(\xi,z))\right| \\& \;
%     + 2\ell_\partial L_k\Bigg|\dfrac{\partial \hat h}{\partial \xi}(\xi,\alpha(\xi,z)) - \dfrac{\partial  h}{\partial \xi}(\xi)  \Bigg|\\ &\; +2 L_\alpha L_k\Bigg| \dfrac{\partial \hat h}{\partial u}(\xi,\alpha(\xi,z))\Bigg|\\
%     &\le \ell_\partial(1+2(L_\alpha+L_k+L_\alpha L_k)) (\Delta_{\partial \xi} +\Delta_{\partial u}  +\Delta_{\partial y})
% \end{align*}
Then, by picking
\begin{equation*}
    \mu = \max \left\{\ell (1+L),\bar\ell(1+2(L_\alpha+L_k+L_\alpha L_k))\right\}
\end{equation*}
and 
$$
\bar \delta < \dfrac{\min\{\delta_3,\delta_4\}}{\mu}.
$$
Theorem~\ref{thm:perturbed_stable} guarantees the existence of an  asymptotically stable equilibrium $x_e = (\xi_e,z_e)$ which is close to the origin and locally exponentially stable. In such an equilibrium, $z_e^+=z_e$ and consequently
$k(\xi,\hat h(\xi_e,\alpha(\xi_e,z_e))) = 0$.
By \eqref{eqn:gen_integr}, $\hat h(\xi_e,\alpha(\xi_e,z_e)) = 0$ and this concludes the proof.
\end{proof}
Proposition~\ref{prop:robust_integral} states that discrete-time  stabilizing controllers exploiting integral actions achieve robust output regulation to constant setpoints. As an intuitive example, we may consider controllers designed for a nonlinear model, whose constant parameters are identified through data. Then, Proposition~\ref{prop:robust_integral} guarantees that, if the approximation error is sufficiently small, the controller will still stabilize the system and the output will reach the desired constant setpoint.

\subsection{About Assumption~\ref{asmp:open_loop_stable} and Forwarding approach}
It is not straightforward to see when and how Assumption~\ref{asmp:open_loop_stable} can be satisfied, namely
how to design the feedback law 
$\alpha(x,z)$ for the extended dynamics
\eqref{eqn:sys_int}. An elegant solution comes from discrete-time forwarding techniques\cite{monaco2013stabilization,mattioni2019forwarding}. If the origin of the autonomous system $\xi^+=\varphi(\xi)$ is globally asymptotycally stable and locally exponentially stable, the origin of the extended system \eqref{eqn:ext_sys} can be stabilized via a feedback controller. 
By considering the SISO scenario $m=p=1$, the result in \cite[Theorem 3.1]{mattioni2017lyapunov} provides a feedback law $u = \alpha(\xi,z)$ achieving global asymptotic and local exponential stability of the origin of the extended system \eqref{eqn:ext_sys}.

In particular, suppose that 
$W:\RR^q\to\RR_{\ge 0}$ is a Lyapunov function for system \eqref{eqn:sys} which is locally quadratic. Assume also that the linearization of \eqref{eqn:ext_sys} around the origin is stabilizable and suppose 
to know 
a mapping $M:\RR^q\to\RR$ satisfying
\begin{equation*}
    M(\xi^+) = M(\xi) + k(\xi,h(\xi)).
\end{equation*}
 Then, we can perform a change of coordinates $\eta = z - M(\xi)$ and consider $\zeta = (\xi^\top , \eta^\top)^\top$. This leads to the stabilizing controller $u=\bar \alpha(\zeta)$ 
given by the (implicit) solution to
\begin{equation*}
    u=-\dfrac{1}{u}\int_0^u \dfrac{\partial V}{\partial \zeta}G(\zeta^+(v),v)dv
\end{equation*}
with 
\begin{align*}
    V(\zeta) &= W(\xi) + \eta^2 \\
    G(\zeta,u)^\top &= \left( \dfrac{\partial  g}{\partial u}(\xi,u), -\dfrac{\partial M}{\partial \xi}\dfrac{\partial  g}{\partial u}(\xi,u)\right).
\end{align*}
In the original coordinates, we obtain 
$$
\alpha(\xi,z) = \bar \alpha(\xi, z- M(\xi)).
$$
The function $V$ above defined is a Lyapunov function 
for the closed-loop systems. 
Moreover, note that if the function $W$ has 
the desired homeomorphicity properties (see in particular Section~\ref{sec_topology}), then so has $V$ by construction.

The aforementioned methodology can be generalized to the MIMO scenario following techniques similar to \cite[Section 6]{monaco2011nonlinear}. For discussion about the existence of such a mapping $M$, refer to \cite{mattioni2017lyapunov}.
 In the continuous time case we find an equivalent characterization under the so-called non-resonance conditions, see, e.g. \cite{astolfi2017integral}.

\section{Conclusion} \label{sec:conclusion}
In this paper, we proved that stability properties of a discrete-time nonlinear system transfer to plants described by sufficiently similar models. The results are independent of the structural properties of the system. We applied these results to the robust output regulation problem, showing the rejection of constant disturbances via integral action for nonlinear systems. The application to the problem of robust output regulation in the presence of more complex exogenous signals is  subject of ongoing research.

\medskip
\noindent
\textbf{Acknowledgement.}
The authors thank Lucas Brivadis for 
providing the example \eqref{eq_SystLyap} and its associated Lyapunov function, whose sublevel sets are not homeomorphic to a ball.

% $$
% \begin{aligned}
% x^+ & =x + u
% \\
% z^+ & = z + \sum h(x)
% \\
% u  &= -\nabla_x V(x) + k z
% \end{aligned}
% $$

% $$
% W(x) = V(x) + (z-M(x))^2
% $$
% $$
% M(z^+) = M(z) + h(x)
% $$
% $$
% W^+ \leq V^+ -\varepsilon u^2
% $$
    
%\bibliographystyle{plain}
%\bibliographystyle{abbrv}

\bibliography{biblio.bib}

\end{document}